\definecolor{myBlue}{RGB}{27, 82, 140}
\renewcommand{\vec}[1]{\ensuremath{\mathbf{#1}}}
\renewcommand\emptyset\varnothing
\newcommand{\2}{\vec{2}}
\newcommand{\1}{\vec{1}}
\newcommand{\0}{\vec{0}}
\renewcommand{\A}{\ensuremath{\mathbf{A}}}
\newcommand{\B}{\ensuremath{\mathbf{B}}}
\newcommand{\X}{\ensuremath{\mathbf{X}}}
\newcommand{\KK}{\ensuremath{K}}
\DeclareMathOperator{\sel}{sel}
\DeclareMathOperator{\ar}{ar}
\DeclareMathOperator{\KG}{KG}
\DeclareMathOperator{\PCSP}{PCSP}
\DeclareMathOperator{\Pol}{Pol}
\DeclareMathOperator{\CSP}{CSP}
\DeclareMathOperator{\NAE}{\mathbf{NAE}}
\DeclareMathOperator{\BNAE}{\mathbf{BNAE}}
\DeclareMathOperator{\CF}{\mathbf{CF}}
\DeclareMathOperator{\LO}{\mathbf{LO}}
\theoremstyle{plain}
\newtheorem{theorem}{Theorem}
\newtheorem{lemma}[theorem]{Lemma}
\newtheorem{corollary}[theorem]{Corollary}
\theoremstyle{definition}
\newtheorem{definition}[theorem]{Definition}
\newlist{lemenumerate}{enumerate}{1}
\setlist[lemenumerate]{label=(\roman*),ref=\thelemma\,(\roman*)}
\crefname{lemenumeratei}{lemma}{lemmas}
\begin{document}

\title{Complexity of approximate conflict-free,  linearly-ordered,\\and nonmonochromatic hypergraph colourings\thanks{An extended abstract of this work appeared in the Proceedings of ICALP 2025~\cite{nvwz25:icalp}. This work was supported by UKRI EP/X024431/1 and a Clarendon Fund Scholarship. Work done while Tamio-Vesa Nakajima was at the University of Oxford. For the purpose of Open Access, the authors have applied a CC BY public copyright licence to any Author Accepted Manuscript version arising from this submission. All data is provided in full in the results section of this paper.}}

\author{Tamio-Vesa Nakajima\\
University of Marburg\\
\texttt{nakajima@uni-marburg.de}
\and
Zephyr Verwimp\\
University of Oxford\\
\texttt{zephyr.verwimp@cs.ox.ac.uk}
\and
Marcin Wrochna\\
University of Warsaw\\
\texttt{m.wrochna@mimuw.edu.pl}
\and
Stanislav \v{Z}ivn\'y\\
University of Oxford\\
\texttt{standa.zivny@cs.ox.ac.uk}
}

\date{\today}
\maketitle
\begin{abstract}
Using the algebraic approach to promise constraint satisfaction problems, we establish complexity classifications of three natural variants of hypergraph colourings: standard nonmonochromatic colourings, conflict-free colourings, and linearly-ordered colourings.

Firstly, we show that finding an $\ell$-colouring of a $k$-colourable $r$-uniform hypergraph is \NP-hard for all constant $2\leq k\leq \ell$ and $r\geq 3$. This provides a shorter proof of a celebrated result by Dinur, Regev, and Smyth~[FOCS'02/Combinatorica'05]. 

Secondly, we show that finding an $\ell$-conflict-free colouring of an $r$-uniform hypergraph that admits a $k$-conflict-free colouring is \NP-hard for all constant $2\leq k\leq\ell$ and $r\geq 4$, except for $r=4$ and $k=2$ (and any $\ell$); this case is solvable in polynomial time. The case of $r=3$ is the standard nonmonochromatic colouring, and the case of $r=2$ is the notoriously difficult open problem of approximate graph colouring.

Thirdly, we show that finding an $\ell$-linearly-ordered colouring of an $r$-uniform hypergraph that admits a $k$-linearly-ordered colouring is \NP-hard for all constant $3\leq k\leq\ell$ and $r\geq 4$, thus improving on the results of Nakajima and \v{Z}ivn\'y~[ICALP'22/ACM ToCT'23].
\end{abstract}

\section{Introduction}\label{sec:intro}

\paragraph{Graph colouring}
Graph colouring is one of the most studied computational problems: Given a graph $G$ and an integer $k$, is there a $k$-colouring, i.e., an assignment of one of $k$ colours to the vertices of the graph so that adjacent vertices are assigned different colours? 

Deciding the existence of a 3-colouring is one of Karp's 21 \NP-complete problems~\cite{Karp1972}. Since finding a graph colouring with the smallest number of colours is \NP-hard, there has been much interest in the \emph{approximate graph colouring} (AGC) problem: Given a graph $G$ that admits a colouring with $k$ colours, find a colouring with $\ell$ colours for some $k\leq\ell$. It is believed that for every constant $3 \leq k \leq \ell$, this problem remains \NP-hard~\cite{GJ76}. While some conditional results are known (e.g.~AGC is \NP-hard if we assume the \emph{$2$-to-1 conjecture with perfect completeness}~\cite{Khot02stoc,Dinur09:sicomp}, or only the $d$-to-1 conjecture~\cite{GS20:icalp}), proving unconditional results seems elusive. The strongest results known so far are for $\ell = 2k - 1$~\cite{BBKO21} and $\ell = \binom{k}{\lfloor k / 2 \rfloor} - 1$~\cite{KOWZ23} (the first result is stronger for $k = 3, 4$, and equal to the second for $k = 5$). As progress on proving the hardness of AGC seems to have hit a barrier, it is natural to try to attack variants of AGC, to see if any of the ideas and insights from those problems could apply to the AGC. In this paper,  we will focus on hypergraph generalisations of graph colourings.

\medskip
An (undirected) \emph{$r$-uniform hypergraph} is a pair $(V, E)$, where $V$ is the vertex set and $E \subseteq V^r$ is a set of $r$-tuples that is closed under permutation of coordinates.  Note that a 2-uniform hypergraph is a graph. A  colouring of a graph $(V,E)$ is an assignment of colours $c(v)$ to the vertices $v \in V$ such that for every edge $(u,v) \in E$ we have $c(u) \neq c(v)$. 
The following three equivalent formulations describe the same condition: For every edge $(u,v)\in E$ we have that (i) the set $ \{ c(u), c(v) \}$ contains at least 2 elements, or (ii) some colour in the multiset $\{\{ c(u), c(v) \}\}$ appears exactly once, or (iii) the largest colour in the multiset $\{\{ c(u), c(v) \}\}$ appears exactly once.\footnote{This notion assumes that the colours are taken from a totally ordered set, e.g.,~the natural numbers.}
When these three definitions are applied to hypergraphs, we get three different notions, namely nonmonochromatic (NAE) colourings, conflict-free (CF) colourings, and linearly-ordered (LO) colourings.
Note that any LO colouring is a CF colouring, and any CF colouring is an NAE colouring. 
The three notions of colourings are different already for 4-uniform\footnote{For 3-uniform hypergraphs, NAE and CF colourings coincide, but LO colourings are different.} hypergraphs.\footnote{Indeed, the edge $(a, b, c, d)$ could be assigned colours $\{\{1, 1, 2, 2\}\}$ in an NAE colouring, but not in the other two; whereas the edge could be assigned colours $\{\{1, 2, 2, 2\}\}$ in a CF colouring, but not in an LO colouring.}

\paragraph{Promise CSPs}
We will now review the most relevant literature on the three variants of hypergraph colourings.
It will be convenient to present the existing results in the framework of so-called \emph{promise constraint satisfaction problems} (PCSPs)~\cite{AGH17,BG21:sicomp}, as we shall use the tools developed for understanding the computational complexity of PCSPs~\cite{BBKO21}.
Constraint satisfaction problems (CSPs) are problems that can be cast as homomorphisms between relational structures. We will only need a special case of relational structures that contain only one relation. Formally, a \emph{relational structure} $\A=(A,{R^{\A}})$ is a pair, where $A$ is the universe of $\A$ and $R^\A\subseteq A^r$ is an $r$-ary relation. By abuse of language, we call $r$ the \emph{arity} of $\A$, and we will often say $\vec{x}$ is a tuple in $\A$ when we mean $\vec{x} \in R^\A$.

Note that graphs and uniform hypergraphs are relational structures, where the universe is the vertex set and the relation is the edge set --- the arity of the relation is 2 for graphs and $r$ for $r$-uniform hypergraphs. A \emph{homomorphism} from one relational structure of arity $r$, say $\A=(A,R^\A)$, to another of the same arity, say $\B=(B,R^\B)$, is a map $h:A\to B$ that preserves the relation: if $(a_1,\ldots,a_r)\in R^{\A}$ then $(h(a_1),\ldots,h(a_r))\in R^{\B}$. We denote the existence of a homomorphism from $\A$ to $\B$ by writing $\A\to\B$.

Given two relational structures $\A$ and $\B$ with $\A\to\B$, the promise constraint satisfaction problem with template $(\A,\B)$, denoted by $\PCSP(\A,\B)$, is the following computational problem. Given a relational structure $\X$ with the promise $\X \to \A$, find a homomorphism from $\X$ to $\B$. This is the search version of the problem. In the decision version, one is given a relational structure $\X$ with the same arity as $\A$ and the task is to output \textsc{Yes} if $\X\to\A$ and \textsc{No} if $X\not\to\B$. Since the decision version reduces to the search version, solving the decision version is no harder than solving the search version. All of our results will hold for both versions --- hardness results will hold even for the decision version, and tractability results will hold even for the search version.

In order to cast approximate hypergraph NAE/CF/LO-colourings as PCSPs, we will need to encode the NAE/CF/LO-colourability of a hypergraph by a homomorphism to a suitable relational structure.
We will thus describe three families of relational structures capturing the three types of hypergraph colourings mentioned above (and therefore implicitly graph colouring).
For any arity $r\geq 2$ and domain size $k$, we define:\footnote{For any positive integer $n$, we write $[n]$ for the set $\{1,2,\ldots,n\}$.}
\begin{align*}
    \NAE^r_k &= (\{0,1,\dots,k-1\}, \{ (x_1, \ldots, x_r) \mid \exists\, i, j \in [r] : x_i \neq x_j \}), \\
    \CF^r_k &= (\{0,1,\dots,k-1\}, \{ (x_1, \ldots, x_r) \mid \exists\, i \in [r] \ \forall\, j\neq i \in [r]:  x_i \neq x_j \}), \\
    \LO^r_k &= (\{0,1,\dots,k-1\}, \{ (x_1, \ldots, x_r) \mid \exists\, i \in [r] \ \forall\, j\neq i \in [r] : x_i > x_j \}).
\end{align*}
Observe that an $r$-uniform hypergraph $\X$ has an NAE $k$-colouring if and only if $\X \to \NAE^r_k$. The analogous statement holds for CF and LO colourings. Since NAE, LO and CF colourings are all identical to graph colouring on uniformity 2, we see that $k$ vs.~$\ell$ AGC is the same as $\PCSP(\NAE^2_k, \NAE^2_\ell)$ --- or equivalently $\PCSP(\CF^2_k, \CF^2_\ell)$ or $\PCSP(\LO^2_k, \LO^2_\ell)$.

Other studied notions of hypergraph colourings include strong hypergraph colouring~\cite{BG16-graph} and rainbow hypergraph colouring~\cite{GL18,ABP20,GS20:rainbow}. 

\paragraph{Nonmonochromatic colourings}
The most studied hypergraph colourings are nonmonochromatic colourings, also known as weak hypergraph colourings.
This is the weakest non-trivial restriction one can impose when colouring the vertices of a hypergraph, i.e., any type of hypergraph colouring (that excludes constant colourings) is also a nonmonochromatic colouring. As mentioned before, nonmonochromatic $k$-colourings of an $r$-uniform hypergraph correspond to homomorphisms from the hypergraph to $\NAE^r_k$. Since nonmonochromatic colouring is \NP-hard for any uniformity \(r \geq 3\) and number of colours \(k \geq 2\), an investigation of the approximate version led to the following result (the arity \(\geq 4\) case was shown in \cite{GHS02-HardnessApproximateHypergraphColoring} without the use of topology, while the arity \(3\) case was shown in \cite{DRS05} relying on the higher chromatic number of Kneser and Schrijver graphs): \begin{restatable}{theorem}{main}\label{thm:main} $\PCSP(\NAE_k^r, \NAE_\ell^r)$ is \NP-hard for all constant $2 \leq k \leq \ell$ and $r \geq 3$.
\end{restatable}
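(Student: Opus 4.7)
The plan is to exploit the algebraic framework for PCSPs, under which NP-hardness of $\PCSP(\A,\B)$ can be deduced from structural properties of the polymorphism minion $\Pol(\A,\B)$: roughly, if there is no minion homomorphism from $\Pol(\A,\B)$ to a tractable minion (projections, suitably strengthened to account for minions like affine maps over small Abelian groups), then $\PCSP(\A,\B)$ is NP-hard.

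First, I would unpack $\Pol(\NAE_k^r,\NAE_\ell^r)$: an $n$-ary polymorphism is a map $f:[k]^n\to[\ell]$ such that for every $r\times n$ matrix $M$ over $[k]$ whose columns are all non-monochromatic, the image tuple $\bigl(f(M_{1,\cdot}),\ldots,f(M_{r,\cdot})\bigr)$ is also non-monochromatic in $[\ell]$. I would then try to prove a structural lemma asserting that each such polymorphism has an ``essential footprint'' of bounded size, depending only on $k,\ell,r$. The natural strategy is contradiction: supposing $f$ has many coordinates on which its value can be meaningfully altered, one builds $r$ input vectors $x^{(1)},\ldots,x^{(r)}\in[k]^n$ whose concatenated matrix is non-monochromatic in every column (this is where $r\geq 3$ matters: the extra slack lets one pair up coordinates so that no single column collapses), yet whose outputs all coincide in $[\ell]$ --- contradicting the polymorphism condition. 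The combinatorial engine here is a pigeonhole / Ramsey-style argument on the bounded range $[\ell]$, combined with the observation that along essential coordinates the value of $f$ can be freely steered.

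Having obtained this structural constraint, I would invoke the standard BBKO-style hardness criterion (specialised to minions of bounded essential arity and devoid of affine-like symmetry coming from the NAE relation) to conclude that $\Pol(\NAE_k^r,\NAE_\ell^r)$ admits no minion homomorphism to projections, yielding NP-hardness by reduction from a canonical hard PCSP such as Gap Label Cover. Monotonicity in the parameters $k,\ell$ (hardness for $(k,\ell)$ gives hardness for any $(k',\ell')$ with $k\leq k'\leq \ell'\leq \ell$) may be used to reduce the number of cases to consider.

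\textbf{Main obstacle.} The structural lemma is the combinatorial heart of the argument and the most delicate part. One must pin down the right notion of ``essential coordinate'' calibrated to the NAE relation, then design the adversarial $r$-tuple of inputs so that every column becomes non-monochromatic while all $r$ outputs collapse to the same value. It is precisely this step that uses $r\geq 3$: with $r=2$ the analogous construction cannot be carried out, matching the fact that approximate graph colouring remains open. Keeping the bound on essential arity explicit in $k,\ell,r$ --- and verifying that the resulting minion really fails the tractability criterion rather than, say, accidentally mapping to a Boolean-affine minion --- is where the technical work will concentrate.
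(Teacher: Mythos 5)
There is a genuine gap, and it sits exactly where you predicted the technical work would concentrate: the structural lemma you propose is false. The minion $\Pol(\NAE_k^r,\NAE_\ell^r)$ does \emph{not} have bounded essential arity. For instance, for $(\NAE_2^3,\NAE_4^3)$, fix a large set $A\subseteq[n]$ and a coordinate $j\notin A$, and let $f(x)=1$ if $A\subseteq X$, $f(x)=2$ if $A\cap X=\emptyset$, and otherwise $f(x)=3$ or $4$ according to $x_j$ (where $X$ is the support of $x$); one checks this is a polymorphism, yet every coordinate of $A$ is essential, and $A$ can have size $n-1$. So no argument can extract a bounded ``essential footprint,'' and the hardness criterion you plan to invoke (``bounded essential arity, hence no minion homomorphism to projections'') is not available; indeed, mere absence of a minion homomorphism to projections is not a known sufficient condition for \NP-hardness of a PCSP. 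The paper instead works with \emph{avoiding sets}: sets $S$ of coordinates such that $f(\1_R)$ misses some fixed value for every $R\supseteq S$. These are bounded in size even when essential arity is not, they are preserved under minors, and a canonical bounded-size $\sel(f)$ is built from a maximal disjoint family of ``maximally avoiding'' sets, which is then fed into the chain-of-minors criterion of \Cref{thm:chain hardness}.

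A second, related gap is the combinatorial engine. Your ``pigeonhole/Ramsey-style argument on the bounded range $[\ell]$'' is not enough to produce the adversarial inputs: the paper needs two \emph{disjoint} sets $S,T$ of prescribed size with $f(\1_S)=f(\1_T)$, and the existence of such a pair is exactly the statement that the Kneser graph $\KG(n,h)$ has chromatic number $>\ell$ --- Lov\'asz's topological theorem (\Cref{kneserchrombound}), plus an easy case of Hedetniemi's conjecture (\Cref{prodchrombound}) for the boosting step (\Cref{lem:lem2}) that upgrades many disjoint $t$-avoiding sets to a $(t+1)$-avoiding set. Without this ingredient the construction of the $r$ compatible rows with equal outputs does not go through. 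Your reduction of the general $(k,\ell,r)$ to a core case via homomorphic relaxation and a uniformity gadget is in the right spirit and matches the paper's final step, but the core case itself needs to be rebuilt around avoiding sets rather than essential coordinates.
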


In this paper we will provide a simpler proof of this result. The proof in~\cite{DRS05} relies on constructing a somewhat ad-hoc reduction and analysing its completeness and soundness. We recast this proof in the  recent algebraic framework for PCSPs~\cite{BBKO21}. We also avoid the use of Schrijver graphs, working only with the simpler Kneser graphs, plus a (correct and very easy) case of Hedetniemi’s conjecture\footnote{Note that Hedetniemi’s conjecture was proved false in the general case~\cite{shitov_counterexamples_2019}.} (cf.~\Cref{prodchrombound}). We believe that our simplification is of interest since it replaces a more quantitative analysis of the polymorphisms (i.e. bounds are functions of the arity) with one that only deals with constant bounds everywhere.

\paragraph{Conflict-free colourings}

A conflict-free hypergraph colouring is a colouring of the vertices in a
hypergraph such that every hyperedge has at least one uniquely coloured
vertex~\cite{Even03:sicomp,Smorodinsky2013}. 
As mentioned before, conflict-free $k$-colourings of an $r$-uniform hypergraph correspond to homomorphisms from the hypergraph to $\CF^r_k$.
We shall determine the complexity of $\PCSP(\CF^r_k,\CF^r_\ell)$ for all
constants $2\leq k\leq\ell$ and $r\geq 3$ (the case of $r=3$ corresponding to
nonmonochromatic colourings, i.e., $\CF^3_k=\NAE^3_k$ for every $k$).

After the easy observation that $\PCSP(\CF_k^r, \CF_\ell^r)$ reduces to $\PCSP(\CF_k^{r + t}, \CF_\ell^{r + t})$ for $t \geq 2$ (cf.~\Cref{lem:red} in~\Cref{sec:prelims}), \Cref{thm:main} directly implies \NP-hardness for promise conflict-free colouring for uniformity $r\geq 5$.
The crux of the result is to deal with the case of uniformity $r=4$.
Note that finding a conflict-free colouring of a 4-uniform hypergraph using 2
colours is identical to solving systems of equations of the form $x + y + z + t \equiv 1 \pmod{2}$ over $\mathbb{Z}_2$, and is hence in \P{} and consequently so is $\PCSP(\CF_2^4, \CF_\ell^4)$ for every $\ell\geq 2$. 
We resolve the only remaining case, showing in~\Cref{sec:proof} that $\PCSP(\CF_k^4, \CF_\ell^4)$ is \NP-hard for all $3 \leq k \leq \ell$.
Summarising, we have
\begin{restatable}{theorem}{cf}\label{thm:cf}
$\PCSP(\CF_k^r, \CF_\ell^r)$ is \NP-hard  for all constant $2 \leq k \leq \ell$ and $r \geq 3$, except for $k=2$ and $r=4$, which is in \P.
\end{restatable}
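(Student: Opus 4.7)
The theorem has four sub-cases, three of which follow directly from the preceding results. For $r=3$ the relation $R^{\CF^3_k}$ coincides with $R^{\NAE^3_k}$, since in any non-monochromatic triple the minority colour appears exactly once; hardness is thus \Cref{thm:main}. For $r\geq 5$ and $2\leq k\leq\ell$, the padding reduction of \Cref{lem:red} with $t=r-3\geq 2$ turns the hard case $r=3$ into hardness at uniformity $r$. For $k=2$ and $r=4$, the relation $R^{\CF_2^4}$ consists exactly of the $(3,1)$-patterns, which over $\{0,1\}$ is the single affine equation $x_1+x_2+x_3+x_4\equiv 1\pmod 2$; thus $\PCSP(\CF_2^4,\CF_\ell^4)$ reduces to linear equations over $\mathbb{F}_2$ and is solvable in polynomial time by Gaussian elimination for every $\ell\geq 2$, since any solution to the system is automatically a CF colouring in $\ell$ colours. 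The technically novel case is $r=4$ with $3\leq k\leq\ell$, which I would attack in detail.

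For this case, I would work inside the algebraic framework of~\cite{BBKO21} and analyse the polymorphism minion $\Pol(\CF_k^4,\CF_\ell^4)$. The non-edges of $\CF_k^4$ are precisely the tuples in which every occurring colour appears at least twice, which for $r=4$ falls into two shapes: the constant tuple $(a,a,a,a)$ and the permutations of $(a,a,b,b)$ with $a\neq b$. An $n$-ary $f\colon[k]^n\to[\ell]$ is a polymorphism iff every $n\times 4$ matrix whose columns avoid both shapes over $[k]$ is sent columnwise to a tuple avoiding both shapes over $[\ell]$. The goal is to prove a ``bounded essential arity''-type structural bound on $f$, uniform in $n$, and to plug it into the baby PCP pipeline of~\cite{Barto22:soda} in exactly the same way as in the proof of \Cref{thm:main}.

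Concretely, the first step would use the constant non-edge $(a,a,a,a)$: placing it in three of the four columns reduces the constraint on $f$ in the remaining column to essentially the NAE-polymorphism condition, and a Kneser-type colouring argument identical to the one used for \Cref{thm:main} already forces $f$ to have a bounded number of essential coordinates when only this non-edge is considered. The second step, which is the genuinely new ingredient, would use the $(a,a,b,b)$ non-edges to sharpen this: they tie pairs of column groups together and can be encoded as a chromatic-number constraint on an appropriate Kneser-style product graph, which is controlled via the easy instance of Hedetniemi's conjecture recorded in \Cref{prodchrombound}. Combining the two steps should show that $\Pol(\CF_k^4,\CF_\ell^4)$ admits no minion homomorphism to the tractable minions detected by baby PCP, and \NP-hardness follows in the standard way.

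The main obstacle is the second family of non-edges. The $(2,2)$-split pattern has no analogue in the NAE setting used for \Cref{thm:main}, so a single Kneser argument no longer suffices, and the two steps above need to be coupled so that the constants produced by both remain independent of the arity $n$. Once the coupled structural bound is in place, plugging it into the baby PCP reduction follows the template already set by the proof of \Cref{thm:main}.
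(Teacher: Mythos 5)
Three of your four cases are correct and coincide with the paper's argument: $r=3$ via $\CF^3_k=\NAE^3_k$, $r\geq 5$ via \Cref{lem:red}, and the tractability of $k=2$, $r=4$ via the mod-2 equation plus homomorphic relaxation. The problem is the remaining case $r=4$, $3\leq k\leq\ell$, which is the entire technical content of the theorem and which your proposal does not prove: you describe a plan to analyse $\Pol(\CF_k^4,\CF_\ell^4)$ directly, and you yourself flag the $(2,2)$-split non-edges as an obstacle whose resolution is left open (``once the coupled structural bound is in place\dots''). That is a genuine gap, not a proof. Two further points suggest the plan as stated would not converge. First, ``bounded essential arity'' is the wrong target: the hardness criterion actually used (\Cref{thm:chain hardness} via \Cref{our hardness condition}) only needs a bounded-size selection function built from small avoiding sets that is stable under chains of minors; no essential-arity bound on the polymorphisms is claimed or needed. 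Second, and more importantly, the paper never analyses $\Pol(\CF_k^4,\CF_\ell^4)$ at all. The missing idea is to strengthen the promise side by homomorphic relaxation: since $\LO_3^4\to\LO_k^4\to\CF_k^4$, it suffices to prove \NP-hardness of $\PCSP(\LO_3^4,\CF_\ell^4)$ (\Cref{locf}). Working with $\LO_3$ on the input side gives the polymorphisms a three-valued, unique-maximum structure to exploit: one evaluates $f$ on vectors $\0$, $\1_Y$, $\2_S$, $\2_T$ and uses the Kneser graph chromatic number (plus \Cref{prodchrombound}) to produce small $1$-avoiding sets and to boost $t$-avoiding sets to $(t+1)$-avoiding ones (\Cref{lem:existence,lem:boosting}). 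This sidesteps your $(2,2)$-split obstacle entirely, and as a bonus the same theorem yields \Cref{thm:lo}. Without this relaxation step (or a completed direct analysis), your proof of the $r=4$, $k\geq 3$ case is incomplete.
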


This also immediately implies the following (much weaker) corollary, which does not appear to have been known in the CF-colouring literature.

\begin{corollary}
    It is \NP-hard to approximate the conflict-free chromatic
    number\footnote{That is, the minimum number of colours needed to CF-colour
    a given hypergraph.} of a hypergraph to within any constant factor, even if it
    is $r$-uniform for some constant $r \geq 3$.
\end{corollary}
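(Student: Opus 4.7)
The plan is a standard gap-to-approximation reduction directly from Theorem~\ref{thm:cf}, so no new ideas beyond invoking it are needed.

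Fix a target approximation factor $c \geq 1$ and a uniformity $r \geq 3$. First I would pick a ``completeness parameter'' $k$ for which Theorem~\ref{thm:cf} guarantees \NP-hardness at uniformity $r$: take $k = 2$ when $r = 3$ or $r \geq 5$, and take $k = 3$ when $r = 4$ (so as to avoid the single tractable case $(k,r) = (2,4)$). Then set $\ell := \lceil c k \rceil$ and invoke Theorem~\ref{thm:cf} to conclude that $\PCSP(\CF_k^r, \CF_\ell^r)$ is \NP-hard. Since $\X \to \CF_m^r$ is equivalent to $\chi_{\mathrm{CF}}(\X) \leq m$, the yes-instances of this PCSP satisfy $\chi_{\mathrm{CF}}(\X) \leq k$ while the no-instances satisfy $\chi_{\mathrm{CF}}(\X) > \ell \geq c k$.

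Finally, suppose for contradiction there is a polynomial-time algorithm outputting a value $v$ with $\chi_{\mathrm{CF}}(\X) \leq v \leq c \cdot \chi_{\mathrm{CF}}(\X)$ on every $r$-uniform input $\X$. On a yes-instance we would get $v \leq c k \leq \ell$, and on a no-instance $v > \ell$; comparing $v$ to the constant threshold $\ell$ therefore decides the promise problem in polynomial time, contradicting Theorem~\ref{thm:cf}. I do not anticipate any obstacle: the only subtlety is making sure the tractable case $(k, r) = (2, 4)$ is avoided, which is handled by choosing $k = 3$ for $r = 4$.
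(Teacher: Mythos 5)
Your proposal is correct and is exactly the standard gap argument the paper has in mind: the paper gives no explicit proof, stating the corollary as an immediate consequence of \Cref{thm:cf}, and your write-up (including the care taken to dodge the tractable case $(k,r)=(2,4)$ by switching to $k=3$) fills in precisely that routine reduction.
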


\paragraph{Linearly-ordered colourings}

A linearly-ordered~\cite{Barto21:stacs} (or
unique-maximum~\cite{Cheilaris13:sidma}) hypergraph colouring is a colouring of
the vertices in a hypergraph with linearly-ordered colours such that the maximum
colour in every hyperedge is unique.
As mentioned before, linearly-ordered $k$-colourings of an $r$-uniform hypergraph correspond to homomorphisms from the hypergraph to $\LO^r_k$.

Barto, Battistelli, and Berg~\cite{Barto21:stacs} conjectured that
$\PCSP(\LO_k^3,\LO_\ell^3)$ is \NP-hard for all constant $2\leq k\leq \ell$.
Building on the topological methods of Krokhin, Opr\v{s}al, Wrochna, and
\v{Z}ivn\'y~\cite{KOWZ23}, Filakovsk\'y, Nakajima, Opr\v{s}al, Tasinato, and
Wagner~\cite{fnotw25:toct} established \NP-hardness of $\PCSP(\LO_3^3,\LO_4^3)$.
This result was recently subsumed by \NP-hardness of $\PCSP(\LO_2^3,\LO_3^3)$
shown by Krokhin and Vagnozzi~\cite{krokhin_approximating_2025} (via a simple
reduction that gives \NP-hardness of $\PCSP(\LO_k^3,\LO_{k+1}^3)$ for all $k\geq
2$).

For higher arities ($r\geq 4$), Nakajima and \v{Z}ivn\'y~\cite{NZ23:toct} showed \NP-hardness of
$\PCSP(\LO_k^r,\LO_\ell^r)$ for every $2\leq k\leq \ell$ whenever $r\geq \ell-k+4$. We strengthen this result, showing \NP-hardness of $\PCSP(\LO_k^r,\LO_\ell^r)$ for $3 \leq k \leq \ell$ for every $r \geq 4$.

\begin{restatable}{theorem}{lo}\label{thm:lo}
    $\PCSP(\LO^r_k,\LO^r_\ell)$ is \NP-hard for all constant $3\leq k\leq \ell$ and $r\geq 4$.
\end{restatable}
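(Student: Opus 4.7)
The plan is to apply the algebraic framework for PCSPs~\cite{BBKO21}. The natural strategy is to first reduce the cases $r \geq 5$ to the base case $r = 4$ by a gadget construction analogous in spirit to \Cref{lem:red} for CF colourings, padding each $4$-edge with $r - 4$ auxiliary vertices whose colours are constrained (through auxiliary LO-gadgets that exploit $k \geq 3$) so that LO-colourability is preserved in both directions. The core task then becomes proving NP-hardness of $\PCSP(\LO^4_k, \LO^4_\ell)$ for all $3 \leq k \leq \ell$.

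For this base case, one would study the polymorphism minion $\Pol(\LO^4_k, \LO^4_\ell)$. A polymorphism is a map $f\colon [k]^n \to [\ell]$ such that, applied row-wise to any $4 \times n$ matrix whose columns each have a unique maximum in $[k]$, the resulting $4$-tuple has a unique maximum in $[\ell]$. The structural goal I would aim for is a form of bounded essential arity: there is a set $S \subseteq [n]$ of size depending only on $k$ and $\ell$ such that $f$'s value on every LO-valid input is determined by its restriction to $S$. From such a theorem, a minion homomorphism from $\Pol(\LO^4_k, \LO^4_\ell)$ to a known hard target minion (for instance a bounded-arity projection minion) yields NP-hardness via the main hardness criterion of~\cite{BBKO21}, possibly combined with the baby PCP theorem of Barto and Kozik~\cite{Barto22:soda}.

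The hypothesis $k \geq 3$ is essential: for $k = 2$, the $\LO^4$ relation is $1$-in-$4$ satisfiability, whose polymorphisms include parity-like operations that would defeat any bounded-essential-arity claim. With $k \geq 3$, one has enough flexibility to build LO-valid matrices that isolate candidate influential coordinates and squeeze them into a limited pattern. This is also where the argument must improve on~\cite{NZ23:toct}, whose polymorphism analysis yields the unwanted trade-off $r \geq \ell - k + 4$.

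The main obstacle is the structural theorem itself. LO polymorphisms interact subtly with the unique-max condition: two LO-valid inputs differing in a single coordinate can impose very different constraints on the output, and the unique-max property does not factor cleanly under products. The most plausible routes are either a refinement of the topological obstructions used for LO in~\cite{fnotw24:stacs} (and for graph colouring in~\cite{KOWZ23}), adapted from uniformity $3$ to $4$ and to arbitrary $k, \ell$, or a direct combinatorial construction of the needed minion homomorphism, exploiting the rich $\LO^4_k$-structure available when $k \geq 3$.
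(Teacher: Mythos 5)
Your proposal correctly identifies the framework (polymorphisms, minors, a hardness criterion from the algebraic theory) but its central structural target is not just unproven --- it is false. You aim to show that every $f\in\Pol(\LO^4_k,\LO^4_\ell)$ has a bounded set $S$ of coordinates such that $f$'s value on every valid input is determined by its restriction to $S$. Already for $k=3$ and $\ell\geq 4$ there are polymorphisms of unbounded essential arity: define $f(\vec{x})=\ell$ if $x_1=3$, $f(\vec{x})=\ell-1$ if $x_1=2$, and if $x_1=1$ let $f(\vec{x})\in\{1,2\}$ according to the parity of the number of $3$s among $x_2,\dots,x_n$. In any $4\times n$ compatible matrix the first column has a unique maximum, which forces a unique maximum in the output tuple regardless of what the parity branch does, so $f$ is a polymorphism; yet its value is not determined by any bounded set of coordinates. (An analogous construction kills the same claim for the NAE templates.) So no minion homomorphism to a bounded-arity "projection-like" minion can be obtained this way, and the proposal has no fallback. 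The correct, much weaker invariant --- which is what the paper actually extracts --- is a bounded-size \emph{avoiding set}: a set $S$ such that $f(\1_R)$ misses some prescribed output values for all $R\supseteq S$. These are preserved under minors (\Cref{item:minorPreserving}), their existence is forced by the chromatic number of Kneser graphs (\Cref{kneserchrombound}, \Cref{prodchrombound}), and disjoint families of them can be boosted to avoid more values (\Cref{lem:existence}, \Cref{lem:boosting}); hardness then follows from the chain-of-minors criterion (\Cref{thm:chain hardness}), not from a full minion homomorphism.

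Two further points. First, your reduction from $r\geq 5$ to $r=4$ is both unspecified and unnecessary: the naive padding $(x,y,z,w)\mapsto(x,y,z,w,\dots,w)$ destroys the unique-maximum property when $w$ carries the maximum, so some genuine gadget would be needed, whereas the paper's two lemmas are stated and proved uniformly for all $r\geq 4$ (the extra rows are handled inside the polymorphism argument by padding matrices with copies of $\0$ or $\1_Z$). Second, you miss the decomposition that makes the whole thing short: by homomorphic relaxation it suffices to prove hardness of the single template $\PCSP(\LO^r_3,\CF^r_\ell)$, since $\LO^r_3\to\LO^r_k$ and $\LO^r_\ell\to\CF^r_\ell$; this one statement (\Cref{locf}) simultaneously yields \Cref{thm:lo} and the $r=4$, $k\geq 3$ cases of \Cref{thm:cf}. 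Your instinct that $k\geq 3$ is essential is right, and is reflected in the paper's use of the three-valued vectors $\0,\1_Y,\2_S$ in the source domain.
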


Observe that this theorem covers nearly all the cases from the result
of~\cite{NZ23:toct}: the only case not covered is $k = 2$ and $r \geq \ell + 2$.
In particular, \Cref{thm:lo} has no requirement on $r$ in terms of $\ell$,
unlike the result in~\cite{NZ23:toct}. Indeed, \Cref{thm:lo} covers the full range of
parameters except for the cases $r = 3$ or $k = 2$ (and thus the conjecture
of Barto et al. remains open).

It is worth digressing somewhat to discuss the appearance of topological methods
within these proofs. Our proof uses the chromatic number of the Kneser graph as
an essential ingredient --- this is a topological fact, and thus our proof is in
some sense topological. (This is similar to the appearance of topology within
hardness proofs for rainbow colourings~\cite{ABP20}.) On the other hand, the
topological approach of~\cite{fnotw25:toct}, which proved that $\PCSP(\LO_3^3, \LO_4^3)$ is \NP-hard, is rather different. It assigns each relational structure an equivariant simplicial complex in a ``nice enough'' way so that the topological properties of these simplicial complexes imply the hardness of the original template. It would be interesting to see if these two approaches can be merged, or combined to strengthen both.

\section{Preliminaries}
\label{sec:prelims}

Let $(\A,\B)$ be a PCSP template with $\A$ and $\B$ of arity $r$.
A \emph{polymorphism} of arity $n=\ar(f)$ of $(\A,\B)$ is a function $f:A^n \to B$ such
that if  $f$ is applied component-wise to any $n$-tuple of elements of $R^\A$ it
gives an element of $R^\B$. In more detail, whenever $(a_{ij})$ is an $r\times
n$ matrix such that every column is in $R^\A$, then $f$ applied to the rows
gives an $r$-tuple which is in $R^\B$.
We denote by $\Pol^{(n)}(\A,\B)$ the collection of $n$-ary polymorphisms of $(\A, \B)$, and we let $\Pol(\A, \B) = \bigcup_n \Pol^{(n)}(\A, \B)$.

For an $n$-ary function $f:A^n \to B$ and a map $\pi : [n] \to [m]$, we say that an $m$-ary function $g:A^m\to B$ is the \emph{minor of $f$ given by $\pi$} if $g(x_1,\ldots,x_m) = f(x_{\pi(1)},\ldots,x_{\pi(n)})$.
We write $f \xrightarrow{\pi} g$ if $g$ is the minor of $f$ given by $\pi$. Note that $\Pol(\A,\B)$ is closed under minors.

We use $\leq_p$ to denote a polynomial-time many-one reduction.

\begin{theorem}[\cite{BG21:sicomp}]\label{polsubset}
If $\Pol(\A,\B) \subseteq \Pol(\A',\B')$ then $\PCSP(\A',\B')\leq_p\PCSP(\A,\B)$.
\end{theorem}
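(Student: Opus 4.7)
The plan is to carry out a polynomial-time gadget reduction whose local correctness is forced directly by the inclusion of polymorphism sets. Throughout let $r$ be the arity of $\A,\B$ and $r'$ be the arity of $\A',\B'$; the inclusion $\Pol(\A,\B)\subseteq\Pol(\A',\B')$ of course forces $A=A'$ and $B=B'$.

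First, I would build a single ``constraint gadget'' $G=(V_G,R^G)$ in the signature of $\A$ and $\B$, equipped with $r'$ distinguished port vertices $v_1,\ldots,v_{r'}$. Let $N=|R^{\A'}|$ and fix an enumeration $R^{\A'}=\{\vec{t}^{(1)},\ldots,\vec{t}^{(N)}\}$. Take $V_G=A^N$, let the port $v_i$ be the vector $(t^{(1)}_i,\ldots,t^{(N)}_i)\in A^N$, and declare $R^G$ to be the set of all $r$-tuples $(a^{(1)},\ldots,a^{(r)})\in (A^N)^r$ such that for every coordinate $j\in[N]$ the tuple $(a^{(1)}_j,\ldots,a^{(r)}_j)$ lies in $R^{\A}$. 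Since the templates are fixed, $G$ has constant size.

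Next, I would verify two properties of $G$. For (i), for every $\vec{t}=\vec{t}^{(j)}\in R^{\A'}$ the coordinate projection $\pi_j:A^N\to A$ is a homomorphism $G\to\A$ sending $v_i\mapsto t_i$: the homomorphism property is immediate from the definition of $R^G$, and $\pi_j(v_i)=t^{(j)}_i=t_i$. For (ii), for every homomorphism $h:G\to\B$, the tuple $(h(v_1),\ldots,h(v_{r'}))$ lies in $R^{\B'}$: unpacking the definition of $R^G$, such an $h:A^N\to B$ is precisely an $N$-ary polymorphism of $(\A,\B)$; by the hypothesis it is also an $N$-ary polymorphism of $(\A',\B')$; and applying it rowwise to the $r'\times N$ matrix whose columns are the tuples $\vec{t}^{(1)},\ldots,\vec{t}^{(N)}\in R^{\A'}$ yields exactly $(h(v_1),\ldots,h(v_{r'}))$, which therefore lies in $R^{\B'}$.

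The reduction itself is now immediate. Given an instance $\X$ of $\PCSP(\A',\B')$, build $\X^*$ of size $O(|\X|)$ by replacing each tuple $(x_1,\ldots,x_{r'})\in R^{\X}$ with a fresh disjoint copy of $G$ in which the port $v_i$ is identified with $x_i$. For completeness, given $h:\X\to\A'$, extend $h$ over each copy of $G$ using (i) with $\vec{t}=(h(x_1),\ldots,h(x_{r'}))\in R^{\A'}$. For soundness, given $h^*:\X^*\to\B$, set $h$ to be the restriction of $h^*$ to $X$; property (ii) applied to each copy of $G$ gives $(h(x_1),\ldots,h(x_{r'}))\in R^{\B'}$, so $h:\X\to\B'$. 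The principal step is the identification in (ii) of homomorphisms $G\to\B$ with $N$-ary polymorphisms of $(\A,\B)$; once that is in place, the inclusion $\Pol(\A,\B)\subseteq\Pol(\A',\B')$ is used verbatim, and the rest is routine bookkeeping between the row- and column-views of the matrix $[\vec{t}^{(1)}\;\cdots\;\vec{t}^{(N)}]$.
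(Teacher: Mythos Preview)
The paper does not give its own proof of this theorem; it is quoted from~\cite{BG21:sicomp} and used as a black box. There is therefore nothing in the paper to compare against directly.

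That said, your argument is the standard one and is correct: the gadget $G$ you build is exactly the \emph{indicator structure} (also called the free structure) of $(\A,\B)$ at arity $N=|R^{\A'}|$, and the key identification ``homomorphisms $G\to\B$ $=$ $N$-ary polymorphisms of $(\A,\B)$'' is precisely how this reduction is proved in the cited literature. One small detail you gloss over: distinct ports $v_i,v_j$ of $G$ may coincide as elements of $A^N$ (this happens iff coordinates $i$ and $j$ agree in every tuple of $R^{\A'}$), in which case identifying them with $x_i,x_j$ merges those variables in $\X^*$; you should note that any $h:\X\to\A'$ already satisfies $h(x_i)=h(x_j)$ in that situation, so completeness is unaffected. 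With that caveat, the proof is complete.
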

\begin{lemma}\label{lem:red}
    For any $t\geq 2$, $\PCSP(\CF_k^r, \CF_\ell^r)
    \leq_p \PCSP(\CF_k^{r + t}, \CF_\ell^{r + t})$.
\end{lemma}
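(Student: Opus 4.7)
The plan is to invoke \Cref{polsubset} and reduce the problem to showing the polymorphism containment
\[\Pol(\CF_k^{r+t}, \CF_\ell^{r+t}) \subseteq \Pol(\CF_k^r, \CF_\ell^r).\]
So I fix an $n$-ary polymorphism $f$ of $(\CF_k^{r+t}, \CF_\ell^{r+t})$ and an $r \times n$ matrix $(a_{ij})$ whose columns lie in $R^{\CF_k^r}$, and I need to verify that applying $f$ row-wise yields a tuple in $R^{\CF_\ell^r}$.

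The natural approach is to pad the matrix with $t$ additional rows so that each column extends to a tuple in $R^{\CF_k^{r+t}}$, apply $f$, and argue that the unique coordinate of the resulting $(r+t)$-tuple must lie within the first $r$ positions. To arrange the latter, I would choose all $t$ added rows to be \emph{identical}: then after applying $f$, the values at positions $r+1,\ldots,r+t$ coincide, and since $t \geq 2$ at least two of them are equal, so no unique coordinate can appear there. For the former, I would specify the shared added row as follows: for each column $j$, the original column has some uniquely occurring value $v_j$ at some position $i_j \in [r]$; pick $w_j \in [k] \setminus \{v_j\}$ (which exists since $k \geq 2$) and let the $j$-th entry of every added row be $w_j$. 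Then $v_j$ still uniquely occurs at position $i_j$ in the extended column, so the column indeed belongs to $R^{\CF_k^{r+t}}$.

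Applying $f$ to the $(r+t) \times n$ extended matrix yields a tuple $(y_1,\ldots,y_{r+t}) \in R^{\CF_\ell^{r+t}}$, so some position $p$ carries a uniquely occurring value. By the previous paragraph $y_{r+1} = \cdots = y_{r+t}$, and since $t \geq 2$ we conclude $p \leq r$. The value $y_p$ is unique in the full $(r+t)$-tuple, so it is certainly unique among $y_1,\ldots,y_r$, giving $(y_1,\ldots,y_r) \in R^{\CF_\ell^r}$ as desired.

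There is no real obstacle here; the only small design choice is that the padding rows must (i) preserve conflict-freeness of each column and (ii) force the unique coordinate of $f$'s output into the original $r$ positions. Using $t \geq 2$ identical padding rows with entries distinct from the column's unique value simultaneously handles both, and the hypothesis $k \geq 2$ is precisely what makes such $w_j$ available.
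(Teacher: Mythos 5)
Your proposal is correct and follows essentially the same route as the paper's proof: both reduce via \Cref{polsubset} to the polymorphism containment and pad each compatible $r\times n$ matrix with $t\geq 2$ identical rows whose entries avoid the unique value of each column, so that the unique coordinate of $f$'s output is forced into the first $r$ positions. No gaps.
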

\begin{proof}
We will use~\Cref{polsubset} and show that $\Pol(\CF_k^{r + t}, \CF_\ell^{r + t}) \subseteq \Pol(\CF_k^r, \CF_\ell^r)$. Suppose $f \in \Pol^{(n)}(\CF_k^{r + t}, \CF_\ell^{r + t})$. Consider any $r \times n$ matrix $A$ with rows $\vec{a}_1,\ldots,\vec{a}_r \in [k]^n$ such that every column in the matrix has a unique entry. We can choose $\vec{b} \in [k]^n$ such that each column in the $(r + t) \times n$ matrix $A'$ with rows $\vec{a}_1,\ldots,\vec{a}_r$, and $t$ copies of $\vec{b}$ has a unique entry: choose element $i$ of $\vec{b}$ to be any value in $[k]$ \emph{other than} the unique entry in the $i$-th column of $A$.
Since $f$ is a polymorphism of $(\CF_k^r,\CF_\ell^r)$, we get that $(f(\vec{a}_1),\ldots,f(\vec{a}_r),f(\vec{b}),\ldots,f(\vec{b}))$ has a unique entry. Since $t\geq 2$, $(f(\vec{a}_1),\ldots,f(\vec{a}_r))$ must also have a unique entry. Thus $f \in \Pol^{(n)}(\CF_k^{r}, \CF_\ell^{r})$ as required.
\end{proof}

An \emph{$\ell$-chain of minors} is a sequence of the form $f_0 \xrightarrow{\pi_{0,1}} f_1 \xrightarrow{\pi_{1,2}} \cdots \xrightarrow{\pi_{\ell-1,\ell}} f_\ell$. We shall then write $\pi_{i,j}: [\ar(f_i)]\to[\ar(f_j)]$ for the composition of $\pi_{i,i+1}, \dots, \pi_{j-1,j}$, for any $0\leq i < j\leq \ell$. Note that $f_i \xrightarrow{\pi_{i,j}} f_j$. 
We shall use the following \NP-hardness criterion for PCSPs. 
\begin{theorem}[\cite{BWZ21}]\label{thm:chain hardness}
    Suppose there are constants $k, \ell$ and an assignment
    $\sel$ which, for every $f \in \Pol(\A, \B)$, outputs a set $\sel(f) \subseteq [n]$ of size at most $k$, where $n$ is the arity of $f$.
    Suppose furthermore that for every $\ell$-chain of minors there is a pair
    $i<j$ such that $\pi_{i, j}(\sel(f_i))
    \cap \sel(f_j) \neq \varnothing$. Then, 
    $\PCSP(\A,\B)$ is \NP-hard.
\end{theorem}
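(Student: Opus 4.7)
The plan is to reduce from a suitable layered variant of promise label cover whose \NP-hardness comes from PCP machinery (such as the ``baby PCP'' of~\cite{Barto22:soda} mentioned in the introduction). An instance consists of $\ell+1$ layers of variables $V_0,\dots,V_\ell$, each variable $v$ equipped with a label set $[n_v]$, together with projection constraints $\pi_{uv}\colon [n_v]\to[n_u]$ between $u\in V_i$ and $v\in V_j$ for all $i<j$. The completeness case asserts that a globally consistent labeling exists; the soundness case asserts that no size-$k$ list labeling can be made consistent along any chain of $\ell$ layers (i.e.\ for every choice of candidate sets of size $\le k$ per variable, there is a chain $v_0,\dots,v_\ell$ such that no two candidate sets match under the corresponding projection).

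Given such an instance $I$, I would construct an instance $\X$ of $\PCSP(\A,\B)$ by the standard polymorphism-indexed gadget: for each variable $v$ introduce a block $W_v$ of vertices indexed by $A^{n_v}$, and impose the relation $R^\A$ coordinate-wise on each block (so that any homomorphism $\X\to\B$ restricts on $W_v$ to a polymorphism $f_v\in\Pol^{(n_v)}(\A,\B)$); across blocks $W_u,W_v$ identify vertices along the projection $\pi_{uv}$, so as to force $f_u$ to be the minor of $f_v$ along $\pi_{uv}$. For completeness, a consistent labeling $L$ of $I$ gives $\X\to\A$ by sending each $W_v$ to $\A$ via projection to the $L(v)$-th coordinate (which is always a polymorphism, and the cross-block identifications are respected because $L$ is consistent). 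For soundness, any $\X\to\B$ produces, along every chain $v_0,\dots,v_\ell$ in $I$, an $\ell$-chain of minors $f_{v_0}\xrightarrow{\pi_{v_0v_1}} f_{v_1}\to\cdots\to f_{v_\ell}$; applying $\sel$ yields a size-at-most-$k$ list labeling, and the hypothesis of the theorem provides $i<j$ with $\pi_{v_iv_j}(\sel(f_{v_i}))\cap\sel(f_{v_j})\neq\varnothing$, contradicting the soundness of the PCP.

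The main obstacle is selecting and calibrating the correct layered PCP. One needs \NP-hardness of a promise version whose chain structure mirrors the $\ell$-chains-of-minors condition exactly, and whose soundness rules out not just single-label assignments but any size-$k$ \emph{list} labeling consistent along every chain of length $\ell$. Controlling the number of layers, the label-set sizes, and the completeness/soundness gap so that polynomial-size gadgets suffice---in particular keeping each $n_v$ constant so that the block sizes $|A|^{n_v}$ stay constant---is the delicate part, and is precisely the role of the multi-layered PCP construction underlying~\cite{BWZ21}.
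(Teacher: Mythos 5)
The paper imports this theorem from~\cite{BWZ21} and gives no proof of its own, so there is nothing internal to compare against; your sketch is essentially the standard argument from that reference (and from~\cite{BBKO21}): reduce from a multilayered label cover whose soundness rules out size-$k$ list labelings consistent along some pair of layers in every chain, encode each variable by a block indexed by $A^{n_v}$ with $R^{\A}$ imposed coordinate-wise and blocks glued along the projections so that a homomorphism to $\B$ yields a chain of minors, and then apply $\sel$ to contradict soundness. Your outline is correct as far as it goes; the only caveats are minor bookkeeping on the direction of the minor maps (with the paper's convention $f \xrightarrow{\pi} g$ requires $\pi : [\ar(f)] \to [\ar(g)]$, so the chain runs from the layer with the larger label sets downward) and the fact that all the real technical weight sits in the layered PCP with list soundness that you assume, which is precisely what \cite{BWZ21} (or the ``baby PCP'' of \cite{Barto22:soda}) supplies.
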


For a graph $G$, the \emph{chromatic number} of $G$, denoted by $\chi(G)$, is the smallest $k$ such
that $G\to \KK_k$, where $\KK_k$ is the clique on $k$
vertices.
We will rely on Lov\'asz's result for the chromatic number of Kneser graphs~\cite{Lovasz78}.
For $1\leq h \leq |A|$, write $A^{(h)}$ for the family of subsets of $A$ of size $h$.
The \emph{Kneser graph} is defined as 
$\KG(A,h) = (A^{(h)},E)$,
where $\{P,Q\} \in E$ if and only if $P\cap Q = \emptyset$. For the special case $A = [n]$, we use the notation $\KG(n, h) = \KG([n], h)$.

\begin{theorem}[\cite{Lovasz78}]\label{kneserchrombound}
  $\chi(\KG(n,h))=n-2h+2$ for any $n \in \mathbb{N}$, $ 1\leq h \leq n/2$.
\end{theorem}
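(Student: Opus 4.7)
The plan is to prove the upper and lower bounds on $\chi(\KG(n,h))$ separately. The upper bound is a simple explicit construction, while the lower bound is the heart of the matter and requires the Borsuk--Ulam theorem. (The degenerate range $n < 2h$, where the graph is edgeless or empty, is handled by inspection.)

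For the upper bound $\chi(\KG(n,h)) \leq n - 2h + 2$, I would exhibit the standard explicit colouring $c \colon [n]^{(h)} \to [n - 2h + 2]$ defined by $c(S) = \min\bigl(\min(S),\, n - 2h + 2\bigr)$. To verify properness, suppose $S, T$ are disjoint $h$-subsets with $c(S) = c(T) = i$. If $i < n - 2h + 2$, then $i = \min S = \min T \in S \cap T$, contradicting disjointness. If $i = n - 2h + 2$, then $S \cup T \subseteq \{n - 2h + 2, \ldots, n\}$, a set of size $2h - 1$, which cannot contain two disjoint $h$-subsets.

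For the lower bound $\chi(\KG(n,h)) \geq n - 2h + 2$, I would follow Greene's streamlined topological argument. Suppose toward a contradiction that $\KG(n,h)$ admits a proper colouring with $d = n - 2h + 1$ colours. Place points $x_1, \ldots, x_n$ on the sphere $S^d \subseteq \mathbb{R}^{d+1}$ in general position, so that no $d+1$ of them lie on a common great hypersphere. For $x \in S^d$ write $H(x) = \{\, i : \langle x, x_i\rangle > 0 \,\}$ for the indices falling in the open hemisphere centred at $x$. For each colour $c \in [d]$ let $A_c = \{\, x \in S^d : H(x) \text{ contains some } h\text{-subset of colour } c \,\}$, which is open; and let $A_{d+1} = S^d \setminus \bigcup_{c \leq d} A_c$, which is closed. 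These $d+1$ sets cover $S^d$, so the Lyusternik--Shnirelmann form of Borsuk--Ulam produces some index $i$ and some $x \in S^d$ with $x, -x \in A_i$. If $i \leq d$, then $H(x)$ and $H(-x) = -H(x)$ each contain an $h$-subset of colour $i$; those subsets are disjoint, yielding a monochromatic edge of $\KG(n,h)$ and contradicting properness. If $i = d+1$, then $|H(x)|, |H(-x)| < h$, hence $|H(x)| + |H(-x)| \leq 2h - 2$; but general position forces at most $d$ of the $x_j$ to lie on the equator $\{\, j : \langle x, x_j\rangle = 0 \,\}$, so $|H(x)| + |H(-x)| \geq n - d = 2h - 1$, a contradiction.

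The only genuine obstacle is the appeal to Borsuk--Ulam in the lower bound; the rest of the argument is elementary bookkeeping around the sphere cover. No purely combinatorial proof of comparable concision is known, which is why the result is invariably cited rather than reproved, and why \Cref{kneserchrombound} legitimately supplies the topological ``ingredient'' alluded to in the introduction.
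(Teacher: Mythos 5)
The paper does not prove this statement at all: it is imported as a black box from Lov\'asz's 1978 paper, so there is no in-paper argument to compare yours against. Your write-up is a correct and complete rendition of the standard proof in its streamlined form due to Greene (building on B\'ar\'any): the explicit colouring $S \mapsto \min(\min S,\, n-2h+2)$ for the upper bound checks out, and the lower bound correctly assembles the cover of $S^d$ by the $d$ open sets $A_c$ and the closed complement $A_{d+1}$, invokes the Lyusternik--Shnirelmann form of Borsuk--Ulam for covers by sets each of which is open or closed, and derives the two contradictions ($H(x)\cap H(-x)=\emptyset$ gives a monochromatic edge in the first case; the general-position count $|H(x)|+|H(-x)|\geq n-d=2h-1>2h-2$ in the second). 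Relying on Borsuk--Ulam as an external ingredient is entirely appropriate here; this is exactly the topological content the paper's introduction alludes to when it calls its use of Kneser graphs ``topological.''

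One small caveat about the statement rather than your proof: as literally phrased (``for any $n,h\geq 1$'') the formula is false in the degenerate range, e.g.\ $n=h=2$ gives a single isolated vertex with $\chi=1$ while $n-2h+2=0$. The correct hypothesis is $n\geq 2h-1$ (or $n\geq 2h$ for a nonempty edge set), and the paper only ever applies the theorem with $n>2h$, so nothing downstream is affected. Your parenthetical that the range $n<2h$ is ``handled by inspection'' should instead say that the formula simply does not hold there and the theorem must be read with the standard hypothesis $n\geq 2h-1$.
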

\begin{lemma}\label{prodchrombound}
	Let $\chi(G) > n$. Then $\chi(G \times \KK_{n+1}) > n$.\footnote{Here $\times$ denotes the tensor product of two graphs.}
\end{lemma}
\begin{proof}
	We show the contrapositive: suppose there is a homomorphism $G \times \KK_{n+1} \to \KK_n$.
	Equivalently, there is a homomorphism $G \to \KK_n^{\KK_{n+1}}$, where $\KK_n^{\KK_{n+1}}$ is the graph with vertex set $\{f:[n+1]\to[n]\}$, and $f$ and $g$ adjacent if for every distinct $i,j \in [n+1]$, $f(i) \neq g(j)$. This is equivalent since every homomorphism 
    $f : G \times \KK_{n + 1} \to \KK_n$
     corresponds uniquely to the homomorphism $f' : G \to \KK_n^{\KK_{n + 1}}$ given by $f'(u) = (v \mapsto f(u, v))$.
    There is also a homomorphism $\KK_n^{\KK_{n+1}} \to \KK_n$: map any $f \in V(\KK_n^{\KK_{n+1}})$ to an arbitrary repeating element in the range of $f$ (at least one must exist by the pigeonhole principle). Thus $G \to \KK_n$.
\end{proof}

If $\A\to\A'\to\B'\to\B$ then $(\A,\B)$ is a \emph{homomorphic relaxation} of $(\A',\B')$. In this case it follows from the definitions that  $\PCSP(\A,\B)\leq_p\PCSP(\A',\B')$~\cite{BBKO21}.

\section{Proofs of hardness}
\subsection{Avoiding sets imply hardness}
\label{sec:avoiding}

Our hardness proofs will revolve around the notion of \emph{avoiding sets} for
polymorphisms~\cite{BBKO21}, defined below. For $X \subseteq [n]$, we denote by $\1_X$ the
\emph{indicator vector} of $X$: it is the $n$-dimensional vector with $(\1_X)_i = 1$ for $i \in X$ and $(\1_X)_i = 0$ for $i \in [n]\setminus X$.

\begin{definition}
Take $A$ so that $\{0, 1 \} \subseteq A$. Let $f:A^n\to B$ and $T\subseteq B$.
  A \emph{$T$-avoiding set} for $f$ is a set $P \subseteq [n]$ such that for any
  $R \supseteq P$, we have
  $f(\1_R) \notin T$.
  For $t\in
\mathbb{N}$, we call a set $P$ $t$-avoiding for $f$ if it is $T$-avoiding for
$f$ for some subset $T\subseteq B$ of size $t$. 
\end{definition}

We will first collect some simple properties of avoiding sets.
\begin{lemma}\label{lem:avoidingProperties}
    Let $f : A^n \to B$ and $\ell = |B|$.
    \begin{lemenumerate}
        \item\label{item:noLAvoiding} There are no $\ell$-avoiding sets for $f$.
        \item\label{item:domainAvoiding} $[n]$ is an $(\ell - 1)$-avoiding set for $f$.
        \item\label{item:upwardsClosed} If $U$ is $T$-avoiding for $f$ then so is every $V\supseteq U$.
        \item\label{item:minorPreserving} Take $\pi : [n] \to [m]$ and suppose $f \xrightarrow{\pi} g$ for some $g : A^m \to B$. Suppose $P \subseteq [n]$ is $T$-avoiding for $f$. Then $\pi(P)$ is $T$-avoiding for $g$.
    \end{lemenumerate}
\end{lemma}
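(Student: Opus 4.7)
The plan is to dispatch each of the four parts by unwinding the definition of $T$-avoiding directly; this lemma is essentially a bookkeeping exercise, and the only genuine observation lies in part~(iv).

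For part~(i), I would argue by contradiction: if $S$ were $T$-avoiding for some $T \subseteq B$ with $|T| = \ell = |B|$, then $T = B$, so taking $R = S$ we would get $f(\1_R) \notin B$, which is impossible. For part~(ii), take $T = B \setminus \{f(\1_{[n]})\}$, which has size $\ell - 1$; the only $R \subseteq [n]$ with $R \supseteq [n]$ is $R = [n]$ itself, and $f(\1_{[n]}) \notin T$ by construction. Part~(iii) is pure upward-closure: any $R \supseteq V$ also satisfies $R \supseteq U$, so $f(\1_R) \notin T$ by the hypothesis on $U$.

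The only part requiring a small calculation is~(iv). I would take an arbitrary $R' \subseteq [m]$ with $R' \supseteq \pi(S)$, set $R = \pi^{-1}(R') \subseteq [n]$, and verify two things. First, $R \supseteq S$: if $i \in S$ then $\pi(i) \in \pi(S) \subseteq R'$, so $i \in \pi^{-1}(R') = R$. Second, $g(\1_{R'}) = f(\1_R)$: by the definition of minor, $g(\1_{R'}) = f\bigl((\1_{R'})_{\pi(1)}, \ldots, (\1_{R'})_{\pi(n)}\bigr)$, and $(\1_{R'})_{\pi(j)} = 1$ exactly when $\pi(j) \in R'$, i.e. $j \in R$. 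Combining, $g(\1_{R'}) = f(\1_R) \notin T$ since $S$ is $T$-avoiding for $f$ and $R \supseteq S$. Hence $\pi(S)$ is $T$-avoiding for $g$.

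The main (very mild) obstacle is keeping the identification $g(\1_{R'}) = f(\1_{\pi^{-1}(R')})$ straight, since $\pi$ goes from $[n]$ to $[m]$ while $\1_{R'}$ lives on $[m]$ and $\1_R$ on $[n]$; once this is set up correctly the rest is immediate. No additional machinery from the preliminaries beyond the definition of minor is needed.
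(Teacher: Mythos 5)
Your proposal is correct and matches the paper's argument in all essentials: parts (i)--(iii) are handled identically, and part (iv) rests on the same key identity $g(\1_{R'}) = f(\1_{\pi^{-1}(R')})$, which you verify carefully (the paper phrases (iv) contrapositively, but this is immaterial). No gaps.
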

\begin{proof}
  For $(i)$, observe that any $\ell$-avoiding set would imply that $f(\1_{[n]})
    \not \in B$, which is impossible. For $(ii)$, note that $[n]$ is $(B \setminus \{
      f(\1_{[n]}) \})$-avoiding, and hence $(\ell - 1)$-avoiding. $(iii)$ follows
    from the definitions.
    For $(iv)$, first observe that $g(\1_X) = f(\1_{\pi^{-1}(X)})$. For
    contradiction, suppose $\pi(P)$ is not $T$-avoiding for $g$, i.e., there exists $R \supseteq \pi(P)$ with $g(\1_R) \in T$. Then $\pi^{-1}(R) \supseteq P$, and furthermore $f(\1_{\pi^{-1}(R)}) = g(\1_R) \in T$. Hence $P$ is not $T$-avoiding for $f$.
\end{proof}

To apply~\Cref{thm:chain hardness}, we want to build $\sel(f)$ out of (small)
avoiding sets for $f$. This is a good idea because avoiding sets are preserved
by minors, as shown in~\Cref{item:minorPreserving}. The issue is that we might have too many
avoiding sets. For the polymorphisms in this paper, many (small) $t$-avoiding
sets which are pairwise disjoint imply the existence of a (small)
($t+1$)-avoiding set. Thus, since there can be no sets that avoid every output
in the range, as shown in~\Cref{item:noLAvoiding}, there must be some maximal $t$ for which a (small) avoiding set exists. By maximality, there cannot be too many disjoint $t$-avoiding sets. Thus, we can build $\sel(f)$ out of these disjoint ``maximally avoiding'' sets.

\begin{theorem}\label{our hardness condition}

Let $(\A,\B)$ be a PCSP template with $\{0, 1\} \subseteq A$ and $\ell = |B|$.
Suppose that there exist constants $N,\{\alpha_t\}_{t=1}^\ell,\{\beta_t\}_{t=1}^\ell$ such that every $f \in \Pol(\A,\B)$ has the following properties:
\begin{enumerate}
    \item $f$ has a 1-avoiding set of size $\leq \beta_1$.
    \item If $f$ is of arity $\geq N$ and has a disjoint family of $> \alpha_t$ many $t$-avoiding sets, all of size $\leq \beta_t$, then $f$ has a $(t+1)$-avoiding set of size $\leq \beta_{t+1}$.
\end{enumerate}
Then, $\PCSP(\A,\B)$ is \NP-hard.
\end{theorem}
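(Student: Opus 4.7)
The plan is to apply \Cref{thm:chain hardness} by constructing $\sel$ from avoiding sets at the maximal reachable avoidance level.

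For each $f \in \Pol(\A,\B)$, I would let $t^*(f)$ denote the largest $t$ such that $f$ admits a $t$-avoiding set of size $\leq \beta_t$; by condition~1 we have $t^*(f) \geq 1$, and by \Cref{item:noLAvoiding} we have $t^*(f) \leq \ell - 1$. By \Cref{item:upwardsClosed} any such set may be taken non-empty (replace $\varnothing$ by a singleton), so I would pick $\mathcal{F}^*(f)$ to be any maximal family of pairwise-disjoint non-empty $t^*(f)$-avoiding sets of size $\leq \beta_{t^*(f)}$, and set $\sel(f) := \bigcup \mathcal{F}^*(f)$.

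To bound $|\sel(f)|$ uniformly: when $\ar(f) \geq N$, if $|\mathcal{F}^*(f)| > \alpha_{t^*(f)}$ then condition~2 would produce a $(t^*(f)+1)$-avoiding set of size $\leq \beta_{t^*(f)+1}$, contradicting the maximality of $t^*(f)$. When $\ar(f) < N$, there is simply no room for more than $N-1$ disjoint non-empty subsets of $[\ar(f)]$. Either way, $|\sel(f)| \leq \max(N-1,\max_t \alpha_t) \cdot \max_t \beta_t$, a constant.

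To verify the chain condition of \Cref{thm:chain hardness} with chain length $\ell-1$: pushing forward along minors, \Cref{item:minorPreserving} implies $t^*(f_0) \leq t^*(f_1) \leq \cdots \leq t^*(f_{\ell-1})$. Suppose towards contradiction that $\pi_{i,j}(\sel(f_i)) \cap \sel(f_j) = \varnothing$ for every $i<j$. Whenever $t^*(f_i) = t^*(f_j)$, any $S \in \mathcal{F}^*(f_i)$ yields, by \Cref{item:minorPreserving}, a non-empty $t^*(f_j)$-avoiding set $\pi_{i,j}(S)$ of size $\leq \beta_{t^*(f_j)}$ that is disjoint from every member of $\mathcal{F}^*(f_j)$; it could then be appended to $\mathcal{F}^*(f_j)$, contradicting the latter's maximality. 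Hence $t^*$ is strictly increasing along the chain, forcing $\ell$ distinct values in $\{1,\ldots,\ell-1\}$, which is impossible. \Cref{thm:chain hardness} then delivers \NP-hardness.

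The subtle point that I expect to need the most care is ensuring the pushed-forward set $\pi_{i,j}(S)$ is genuinely non-empty: without that, the membership $\pi_{i,j}(S) \in \mathcal{F}^*(f_j)$ would not be ruled out and the maximality contradiction would collapse. This is paid for cheaply, once and for all, by insisting on non-empty avoiding sets when defining $\mathcal{F}^*(f)$, which is justified by \Cref{item:upwardsClosed}.
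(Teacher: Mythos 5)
Your proposal is correct and follows essentially the same route as the paper: define the maximal avoidance level $t(f)$, let $\sel(f)$ be the union of a maximal disjoint family of $t(f)$-avoiding sets of bounded size, bound $|\sel(f)|$ via condition~2, and use \Cref{item:minorPreserving} plus maximality to force an intersection along the chain. Your two refinements --- insisting on non-empty avoiding sets (a technicality the paper's proof glosses over) and exploiting the monotonicity of $t^*$ along the chain to get away with a shorter chain instead of a pigeonhole argument --- are both sound but do not change the substance of the argument.
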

\begin{proof}
For each $f \in \Pol(\A,\B)$, define $t(f)$ to be the maximal $t$ such that $f$
  has a $t$-avoiding set of size $\leq \beta_t$. By Assumption 1 and the lack 
  of $\ell$-avoiding sets (cf.~\Cref{item:noLAvoiding}),  $t(f)$ exists and $1\leq t(f) <
  \ell$. For each $f\in \Pol(\A,\B)$, let $\mathcal{F}_f$ be a maximal disjoint family of $t(f)$-avoiding sets of size $\leq \beta_{t(f)}$. Define $\sel(f) = \bigcup \mathcal{F}_f$. Then by Assumption 2, $|\sel(f)| \leq \max\{N,\max_{1\leq t<\ell} \alpha_t\beta_t\} \eqqcolon k$.
  
Using~\Cref{thm:chain hardness}, it remains to show that for every $\ell$-chain
  of minors there are $i, j$ such that $\pi_{i,
  j}\left(\sel\left(f_i\right)\right) \cap \sel\left(f_j\right) \neq \emptyset$.
  Let $f_0 \xrightarrow{\pi_{0,1}} f_1 \xrightarrow{\pi_{1,2}} \cdots
  \xrightarrow{\pi_{\ell-1, \ell}} f_{\ell}$ be such a chain. Since $1\leq
  t(f_i) < \ell$, there are distinct $i,j$ such that $t(f_i)=t(f_j) \eqqcolon
  t$. By~\Cref{item:minorPreserving}, for every $t$-avoiding set $P$ of
  $f_i$, $\pi_{i,j}(P)$ is $t$-avoiding for $f_j$. Hence by maximality of $\mathcal{F}_{f_j}$, $\pi_{i,j}(P)$ intersects $\sel(f_j)$. Since $\sel(f_i)$ is the union of such sets, certainly $\pi_{i,j}(\sel(f_i))$ intersects $\sel(f_j)$.
\end{proof}
The rest of this paper will show hardness of certain PCSP templates by~\Cref{our hardness condition}.

\subsection{Hardness of promise nonmonochromatic colouring}

In this section we prove the advertised hardness results for NAE colouring. The most important case is $\PCSP(\NAE_2^3, \NAE_\ell^3)$ for $\ell \geq 2$; all other cases derive from this one by either gadget reductions or homomorphic relaxations.

\begin{lemma}\label{lem:lem1}
	Let $\ell \geq 2$ and $n \in \mathbb{N}$.
    Any $f \in \Pol^{(n)}(\NAE_2^3, \NAE_\ell^3)$ has a 1-avoiding set of size $ \leq \ell$.
\end{lemma}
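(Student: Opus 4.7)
The plan is to produce a 1-avoiding set of size at most $\ell$ by finding two large disjoint subsets $A,B \subseteq [n]$ on which $f$ agrees, and then taking $S = [n]\setminus(A\cup B)$. The key observation is that for any sets $R_1, R_2, R_3 \subseteq [n]$ satisfying $R_1\cap R_2 \cap R_3 = \varnothing$ and $R_1 \cup R_2 \cup R_3 = [n]$, the column $((\1_{R_1})_i,(\1_{R_2})_i,(\1_{R_3})_i)$ is neither all-$0$ nor all-$1$ at each coordinate $i$, so it is a valid input column for $\NAE_2^3$, and hence the polymorphism property forces $(f(\1_{R_1}), f(\1_{R_2}), f(\1_{R_3}))$ to lie in $\NAE_\ell^3$, i.e.\ not to be constant.

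With this in hand, suppose we manage to find disjoint $A, B$ with $f(\1_A) = f(\1_B) = c$ and $|A\cup B| \geq n - \ell$. Set $S = [n]\setminus(A\cup B)$, so $|S|\leq \ell$. For any $R \supseteq S$, the triple $(A, B, R)$ automatically satisfies $A \cap B \cap R = \varnothing$ (since $A,B$ are disjoint) and $A\cup B\cup R \supseteq A\cup B\cup S = [n]$, so $f(\1_A), f(\1_B), f(\1_R)$ cannot be constant; since the first two equal $c$, we get $f(\1_R)\neq c$. Thus $S$ is $\{c\}$-avoiding.

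It remains to exhibit such disjoint $A, B$ of equal $f$-value with $|A| + |B| \geq n-\ell$. The trivial case $n \leq \ell$ is handled by $S = [n]$ directly (indeed $[n]$ is always $(\ell-1)$-avoiding by \Cref{item:domainAvoiding}). Otherwise set $h = \lceil (n-\ell)/2 \rceil \geq 1$ and consider the Kneser graph $\KG(n,h)$. The map $T \mapsto f(\1_T)$ on vertices of $\KG(n,h)$ is an $\ell$-colouring of its vertex set; by \Cref{kneserchrombound}, $\chi(\KG(n,h)) = n - 2h + 2 \geq \ell + 1$, so this cannot be a proper colouring. Hence some two adjacent vertices, i.e.\ two disjoint $h$-subsets $A, B$, satisfy $f(\1_A) = f(\1_B)$, and $|A| + |B| = 2h \geq n - \ell$ as required.

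I expect no serious obstacle: the entire argument hinges on correctly identifying which triples of subsets correspond to valid $\NAE_2^3$-input columns (the "covering without being fully covered" condition) and on quoting the Kneser bound with the right parameter $h$. The only thing to be careful about is the parity / edge cases for small $n$, which are absorbed by the trivial case $n \leq \ell$.
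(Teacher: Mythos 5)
Your proposal is correct and follows essentially the same route as the paper's proof: the trivial case $n\le\ell$ via \Cref{item:domainAvoiding}, then the $\ell$-colouring of $\KG(n,h)$ with $h=\lceil(n-\ell)/2\rceil$ to extract two disjoint monochromatic $h$-sets, and the complement as the $\{c\}$-avoiding set justified by the ``not all-0, not all-1'' column check. No gaps.
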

\begin{proof}
  By~\Cref{item:domainAvoiding}, $[n]$ is an $(\ell-1)$-avoiding set and hence a 1-avoiding set. Thus if $n\leq\ell$, we are done.
	Otherwise assume $n > \ell$.
	Let $h = \left\lceil \frac{n-\ell}{2}\right\rceil$.
	Consider the Kneser graph $\KG(n,h)$, and colour each vertex $P$ by $f(\1_P)$.
	By~\Cref{kneserchrombound}, $\chi(\KG(n,h))=n-2h+2 > \ell$, so
	there are disjoint sets $P,Q \in [n]^{(h)}$ with the same colour $f(\1_P)=f(\1_Q) \eqqcolon b$.
	Let $X \coloneq [n] \setminus (P \cup Q)$.
    
    We claim that $X$ is a $\{b\}$-avoiding set and thus 1-avoiding.
    Since $|X| = n - 2h \leq \ell$ this completes the proof.
    In order to prove the claim, consider any $Y$ such that $X \subseteq Y
    \subseteq [n]$; we want to show that $f(\1_Y) \neq b$. Construct a matrix in
    which each column corresponds to an element $i\in [n]$ and 
    whose rows are $\1_P, \1_Q, \1_Y$. 
    Observe that since $P \cup Q \cup Y = [n]$ (as $P, Q, X$ is a
    partition of $[n]$), no column contains only 0s. Similarly since $P \cap Q
    \cap Y = \emptyset$ (as $P$ and $Q$ are disjoint), no column contains only
    1s. Hence all the columns of the matrix whose rows are $\1_P, \1_Q, \1_Y$
    are tuples of $\NAE_2^3$. Since $f$ is a polymorphism of
    $(\NAE_2^3,\NAE_\ell^3)$, 
    $(f(\1_P), f(\1_Q), f(\1_Y))$ must be a tuple in $\NAE_\ell^3$. But $f(\1_P) = f(\1_Q) = b$, so $f(\1_Y) \neq b$ as required.
	  Thus $X$ is 1-avoiding. 
\end{proof}

\begin{lemma}\label{lem:lem2}
	Let $1\leq t < \ell$ and $n \geq (\ell+1) \ell^t + \ell+1$.
  Suppose $f \in \Pol^{(n)}(\NAE_2^3, \NAE_\ell^3)$ has $>\binom{\ell}{t}\cdot \ell$ disjoint $t$-avoiding sets of size $\leq \ell^t$.
	Then $f$ has a $(t+1)$-avoiding set of size $\leq \ell^{t+1}$.	
\end{lemma}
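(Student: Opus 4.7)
The plan is to combine some of the given $t$-avoiding sets into a $(t+1)$-avoiding set. First, I would apply pigeonhole on the avoided subset: classify each of the $>\binom{\ell}{t}\cdot \ell$ disjoint $t$-avoiding sets by the $t$-subset $T\subseteq[\ell]$ it avoids. If two sets $S,S'$ in the family avoid different subsets $T\neq T'$, then $S\cup S'$ is $(T\cup T')$-avoiding with $|T\cup T'|\geq t+1$ and $|S\cup S'|\leq 2\ell^t\leq\ell^{t+1}$, finishing the proof. Hence I may assume all sets $S_1,\dots,S_m$ avoid a common $T$, with $m>\binom{\ell}{t}\cdot \ell$.

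Second, I would apply Lov\'asz's theorem on the Kneser graph. For each $h$-subset $I\subseteq[m]$, the union $U_I=\bigcup_{i\in I}S_i$ is itself $T$-avoiding, so $f(\1_{U_I})$ lies in the $(\ell-t)$-element palette $[\ell]\setminus T$. I would choose $h$ large enough that $\chi(\KG(m,h))=m-2h+2>\ell-t$, which is possible since $m$ is large. By \Cref{kneserchrombound}, this yields a monochromatic edge, i.e.\ disjoint $h$-subsets $I,J\subseteq[m]$ with $f(\1_{U_I})=f(\1_{U_J})=b$ for some $b\in[\ell]\setminus T$.

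Third, I would take $X=[n]\setminus(U_I\cup U_J)$. For any $R\supseteq X$, the rows $\1_{U_I},\1_{U_J},\1_R$ have every column in $\NAE_2^3$: the disjointness $U_I\cap U_J=\emptyset$ rules out $(1,1,1)$-columns, and $R\supseteq[n]\setminus(U_I\cup U_J)$ rules out $(0,0,0)$-columns. Hence $(f(\1_{U_I}),f(\1_{U_J}),f(\1_R))\in\NAE_\ell^3$, forcing $f(\1_R)\neq b$. Provided $2h<m$, there is some $k\in[m]\setminus(I\cup J)$, and $S_k\subseteq X\subseteq R$ gives $f(\1_R)\notin T$; hence $X$ is $(T\cup\{b\})$-avoiding.

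The main obstacle is making the size bound $|X|=n-|U_I|-|U_J|\leq\ell^{t+1}$ compatible with the Kneser constraint. In the worst case where each $|S_i|=1$, $|U_I|+|U_J|=2h$, so we need $h\geq(n-\ell^{t+1})/2=(\ell^t+\ell+1)/2$, which must be reconciled with $h<(m-\ell+t+2)/2$. The hypothesis $m>\binom{\ell}{t}\cdot \ell$ handles most cases; in tighter regimes one would enlarge the Kneser ground set from $[m]$ to $[m]\sqcup E$ (where $E=[n]\setminus\bigcup_i S_i$), treating elements of $E$ as singleton atoms to raise the effective chromatic number, and/or invoke \Cref{prodchrombound} via a tensor product with $\KK_{\ell-t+1}$ for a further boost. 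The bound $n\geq(\ell+1)\ell^t+\ell+1$ is precisely calibrated to close this balance, and the degenerate case $t=\ell-1$ reduces to a vacuous statement by \Cref{item:noLAvoiding}.
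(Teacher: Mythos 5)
Your opening move is sound, and the observation that two sets avoiding \emph{different} $t$-subsets $T\neq T'$ immediately union to a $(T\cup T')$-avoiding set of size $\leq 2\ell^t\leq\ell^{t+1}$ is a nice shortcut past the paper's bare pigeonhole. But the core construction in your steps 2--3 does not produce a small set. If the Kneser graph lives on the index set $[m]$, then $X=[n]\setminus(U_I\cup U_J)$ contains all of $E=[n]\setminus\bigcup_i S_i$, and $|E|$ is unbounded: the lemma gives only a lower bound on $n$, while $\sum_i |S_i|$ may be as small as $m$. So $|X|\leq\ell^{t+1}$ fails not just in a ``tighter regime'' but generically, and the calibration of $n\geq(\ell+1)\ell^t+\ell+1$ cannot rescue it (that hypothesis points the wrong way for your purposes).

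Your sketched repairs contain the right ingredients but are not assembled into a working argument, and each has its own gap. Adding the elements of $E$ as singleton atoms does restore control of $|X|$, but then a monochromatic disjoint pair $U_I,U_J$ may consist entirely of $E$-singletons, in which case you cannot conclude $b\notin T$ (you need $S_i\subseteq U_I$ or $S_i\subseteq U_J$ for that, and without $b\notin T$ the set $T\cup\{b\}$ has size $t$, not $t+1$); separately, you need some $S_k$ disjoint from $U_I\cup U_J$ to get $f(\1_R)\notin T$, which is also not guaranteed once the number of unchosen atoms is squeezed down to about $\ell$ to meet the size bound---these requirements pull against each other. The paper resolves exactly this tension by taking the vertex set to be ``exactly one of $S_1,\dots,S_{\ell+1}$, together with an $h$-subset of $R=[n]\setminus(S_1\cup\dots\cup S_{\ell+1})$'', i.e.\ the graph $\KK_{\ell+1}\times\KG(R,h)$ (only $\ell+1$ of the $S_i$ are needed, which is what the pigeonhole supplies), and uses \Cref{prodchrombound} with the factor $\KK_{\ell+1}$ --- not $\KK_{\ell-t+1}$, and with threshold $\ell$ rather than $\ell-t$ --- to keep the chromatic number above $\ell$. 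This forces each of the two monochromatic vertices to contain exactly one $S_i$ (so $b\notin T$), leaves $\ell-1\geq 1$ of the $S_j$ inside $X$ (so $f(\1_Y)\notin T$), and yields $|X|\leq(\ell-1)\ell^t+\ell\leq\ell^{t+1}$. Without this structure, or an equally precise substitute, your argument does not go through.
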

\begin{proof}
	By assumption and the pigeonhole principle, $f$ has $\geq \ell+1$ disjoint sets $S_1,\dots,S_{\ell+1}\subseteq [n]$ of size $\leq \ell^t$ that avoid the same $T \subseteq \{0,1,\dots\ell-1\}$ of size $|T|=t$.
	Let $R \coloneq [n] \setminus (S_1 \cup \dots \cup S_{\ell+1})$.
	Let $h = \left\lceil \frac{|R|-\ell}{2}\right\rceil$. We have $h \geq 0$ by the lower bound on $n$.
    
	Consider subsets of $[n]$ which are the union of exactly one of $S_1,\dots,S_{\ell+1}$, and a subset of $R$ of size $h$; let $\mathcal{S}$ be the collection of such subsets. We want to find two disjoint sets $P, Q \in \mathcal{S}$ such that $f(\1_P) = f(\1_Q)$.
    Observe that there is a bijection between $\mathcal{S}$ and the vertex set of $\KK_{
    \ell + 1} \times \KG(R, h)$, given by taking vertex $(i, A)$ of $\KK_{
    \ell + 1} \times \KG(R, h)$ to $S_i \cup A \in \mathcal{S}$. Furthermore, this bijection extends to an isomorphism between the graph  $\KK_{
    \ell + 1} \times \KG(R, h)$ and the graph whose vertex set is $\mathcal{S}$ and which considers $P, Q \in \mathcal{S}$ to be adjacent if and only if they are disjoint.
	By~\Cref{prodchrombound}, $\chi(\KK_{\ell+1} \times \KG(R,h)) > \ell$, so there must exist disjoint sets $P, Q \in \mathcal{S}$ with $f(\1_P) = f(\1_Q) \eqqcolon b$.
    
	Since $S_i \subseteq P$ for some $i$, we have $b \not\in T$.
	Let $X \coloneqq [n] \setminus (P \cup Q)$. Identically to the reasoning in
  the proof of~\Cref{lem:lem1}, observe that for any
	$Y \subseteq [n]$ with $X\subseteq Y$, $f(\1_{Y}) \neq b$.
	Moreover, $X\subseteq Y$ implies  $S_j \subseteq Y$ for
  $(\ell+1)-2 \geq 1$ different values of $j$ (i.e.~those for which $S_j \not \in \{P, Q\}$), hence $f(\1_{Y}) \not \in T$.
	Thus $X$ is $(T \cup \{b\})$-avoiding, so $(t + 1)$-avoiding.
	Finally $|X| \leq ((\ell+1)-2) \cdot \ell^t + |R|-2h \leq (\ell-1) \cdot \ell^t + \ell \leq \ell^{t+1}$.
\end{proof}

\main*

\begin{proof}
$\PCSP(\NAE_2^3,\NAE_\ell^3)$ is \NP-hard for all $\ell \geq 2$ by~\Cref{our hardness condition}, \Cref{lem:lem1}, and~\Cref{lem:lem2}.
To extend \NP-hardness to larger uniformities $r\geq3$, it is sufficient to observe that the map sending a 3-uniform hypergraph $I = (V,E)$ to the $r$-uniform hypergraph $I'=(V,E')$ where $E' = \{(x,y,z,\ldots,z)\mid (x,y,z) \in E\}$ describes a reduction $\PCSP(\NAE_2^{3},\NAE_\ell^{3}) \leq_p \PCSP(\NAE_2^{r}, \NAE_\ell^{r})$.
Finally, we can extend \NP-hardness to any pair $k\leq\ell$ since by homomorphic relaxation
$\PCSP(\NAE_2^r,\NAE_\ell^r)\leq_p\PCSP(\NAE_k^r,\NAE_\ell^r)$.
\end{proof}

\subsection{Hardness of promise conflict-free and linearly-ordered colouring}
\label{sec:proof}

In this section we prove the advertised hardness results for both LO and CF
colourings. For the CF colourings, by~\Cref{lem:red} it suffices to establish
hardness for $r=4$. However, our proof is the same for any $r\geq 4$ and thus we
will present it that way. Since $\LO^r_k\to\CF^r_k$, we can then do both LO and
CF colourings ``in one go'' by proving the hardness of $\PCSP(\LO_3^r, \CF_\ell^r)$ for all $\ell \geq 3$ and $r \geq 4$. 
In the following proofs, we let $\0$ denote the vector whose elements are all 0, and we let $\2_X$ denote a ``scaled indicator vector'': $\2_X = 2 \cdot \1_X$.

\begin{lemma}\label{lem:existence}
    Let $\ell \geq 3$ and $r\geq 4$.
    Then any $f \in \Pol^{(n)}(\LO_3^r, \CF_\ell^r)$ has a 1-avoiding set of size $\leq \ell$.
\end{lemma}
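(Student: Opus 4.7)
The plan is to exhibit $\emptyset$ itself as a $1$-avoiding set for $f$, concretely a $\{c\}$-avoiding set with $c := f(\2_{[n]})$. The key structural observation is that whenever a column of a matrix contains a $2$ in some fixed coordinate together with only $0$s and $1$s elsewhere, the $2$ is a strict unique maximum and so the column automatically lies in $\LO^r_3$. This means we can feed $\1_Y$ into $f$ through an LO matrix for free, at the cost of one coordinate spent on a constant $\2$ row.

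Concretely, given an arbitrary $Y \subseteq [n]$, I would form the $r \times n$ matrix whose first $r-2$ rows equal $\0$, whose $(r-1)$-th row equals $\1_Y$, and whose $r$-th row equals $\2_{[n]}$. Each column reads $(0, \ldots, 0, \1_Y(i), 2)$, and the trailing $2$ strictly exceeds every other entry, so the column lies in $\LO^r_3$. Applying the polymorphism row by row produces the tuple $(d, \ldots, d, e, c) \in \CF^r_\ell$, consisting of $r-2$ copies of $d := f(\0)$, followed by $e := f(\1_Y)$ and $c := f(\2_{[n]})$.

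Now I would invoke the $\CF^r_\ell$ condition, which requires some coordinate whose entry does not repeat. Since $d$ occurs $r - 2 \geq 2$ times, it cannot be the unique entry, so the unique entry must sit in position $r-1$ or $r$. If $e = c$ held, this common value would appear in both of those positions and so at least twice, leaving no unique entry — contradiction. Hence $e \neq c$, i.e., $f(\1_Y) \neq c$. As $Y \subseteq [n]$ was arbitrary, this shows that $\emptyset$ is $\{c\}$-avoiding for $f$, giving a $1$-avoiding set of size $0 \leq \ell$.

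I do not anticipate a real obstacle here; the only care needed is in the $\CF^r_\ell$ case analysis, and in using the hypothesis $r \geq 4$ through the inequality $r-2 \geq 2$ to rule out $d$ as a unique entry. Notably, this proof uses neither the Kneser graph nor any assumption that $f$ admits many disjoint monochromatic sets — a single constant row $\2_{[n]}$ is enough to pin down the forbidden value. I would expect the Kneser-style combinatorics to reappear later when bootstrapping $t$-avoiding families into a $(t+1)$-avoiding set, mirroring the role it plays in~\Cref{lem:lem2}.
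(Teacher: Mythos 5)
Your proof is correct, and it takes a genuinely different and more elementary route than the paper's. The matrix whose rows are $r-2$ copies of $\0$, one copy of $\1_Y$, and one copy of $\2_{[n]}$ indeed has every column in $\LO^r_3$ (the trailing $2$ is always the strict unique maximum), and since $f(\0)$ occupies $r-2\geq 2$ coordinates of the output tuple, the unique entry demanded by $\CF^r_\ell$ must sit in one of the last two positions, which forces $f(\1_Y)\neq f(\2_{[n]})$ in either case. Hence $\emptyset$ is $\{f(\2_{[n]})\}$-avoiding, which is even stronger than the size-$\leq\ell$ bound the lemma asks for. The paper instead colours the Kneser graph $\KG(n,h)$ by $S\mapsto f(\2_S)$, invokes Lov\'asz's chromatic bound to find disjoint $S,T$ with $f(\2_S)=f(\2_T)$, and shows that $X=[n]\setminus(S\cup T)$ is $\{f(\0)\}$-avoiding. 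What your shortcut buys is the removal of the topological input from this lemma; what it does not buy is the removal of that input from the overall argument, because the proof of \Cref{lem:boosting} explicitly reuses the objects $h,S,T,X$ constructed in the paper's proof of \Cref{lem:existence} (it needs disjoint $S,T$ with $f(\2_S)=f(\2_T)$ whose complement is both small and disjoint from some $t$-avoiding set $Z$, and your choice of a single constant row cannot supply that), so the Kneser machinery reappears there exactly as you anticipate. One cosmetic point shared by both proofs: the definition of avoiding sets assumes $\{0,1\}\subseteq A$, so one silently identifies the domain $[3]$ of $\LO^r_3$ with $\{0,1,2\}$.
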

\begin{proof}
  By~\Cref{item:domainAvoiding}, $[n]$ is an $(\ell-1)$-avoiding set and hence a 1-avoiding set. Thus if $n\leq\ell$, we are done.
  Otherwise assume $n > \ell$, and set $h = \lceil \frac{n-\ell}{2}\rceil$.
    Consider the Kneser graph $\KG(n,h)$, and colour each vertex $P$ by $f(\2_P)$. By~\Cref{kneserchrombound}, $\chi(\KG(n, h)) = n - 2h + 2 > \ell$, so there exist disjoint sets $P, Q \in [n]^{(h)}$  such that $f(\2_P) = f(\2_Q)$. Let $X \coloneqq [n]\setminus (P \cup Q)$.
    
    Now observe that for any $Y \subseteq [n]$ with $X\subseteq Y$, 
     all the columns of the $r\times n$ matrix whose rows are $\1_Y$, $\2_P, \2_Q$, and $r-3$ many copies of $\0$
    are tuples of $\LO_3^r$.
    For $i \in P$, the unique maximum is a 2 in the $\2_P$ row; for
    $i \in Q$ the unique maximum is a 2 in the $\2_Q$ row; and for $i \in X$
    the unique maximum is a 1 in the $\1_Y$ row. Hence, since $f$ is a
    polymorphism of $(\LO_3^r,\CF_\ell^r)$, we have that $(f(\1_Y), f(\2_P),
    f(\2_Q), f(\0), \ldots, f(\0))$ is a tuple in $\CF_\ell^r$. Since $f(\2_P) = f(\2_Q)$ we deduce that $f(\1_Y) \neq f(\0)$.
    Thus it follows that $X$ is $\{ f(\0) \}$-avoiding, and thus 1-avoiding. Noting that $X$ has size $n-2h\leq \ell$ completes the proof.
\end{proof}

\begin{lemma} \label{lem:boosting}
  Let $  \ell \geq 3$ and $r\geq 4$. Let $1\leq t<\ell$, and suppose that $f\in \Pol^{(n)}(\LO_3^r, \CF_\ell^r)$ has a disjoint family of $> \binom{\ell}{t}\ell$ many $t$-avoiding sets, all of size $\leq t\ell$. Then $f$ has a $(t+1)$-avoiding set of size $\leq (t+1)\ell$.
\end{lemma}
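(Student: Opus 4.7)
The plan is to adapt the proof of Lemma~\ref{lem:lem2} to the LO/CF setting, using a Kneser-product colouring argument but working with scaled indicator vectors $\2_\cdot$ rather than $\1_\cdot$ so as to exploit the unique-max structure of $\LO_3^r$ inputs. First, I apply pigeonhole over the $\binom{\ell}{t}$ possible $t$-subsets of $[\ell]$: from the given family of $>\binom{\ell}{t}\ell$ disjoint $t$-avoiding sets, I obtain $\ell+1$ of them, $S_1,\dots,S_{\ell+1}$, all $T$-avoiding for the same $T\subseteq[\ell]$ with $|T|=t$. Set $R\coloneqq[n]\setminus\bigcup_i S_i$ and $h\coloneqq\lceil(|R|-\ell)/2\rceil$, so that $\chi(\KG(R,h))>\ell$ and $|R|-2h\leq\ell$.

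Next, colour each pair $(i,A)\in[\ell+1]\times R^{(h)}$ by $f(\2_{S_i\cup A})$. As in Lemma~\ref{lem:lem2}, the vertex set together with disjointness as adjacency is isomorphic to $\KK_{\ell+1}\times\KG(R,h)$, and by Lemma~\ref{prodchrombound} this graph has chromatic number greater than $\ell$, so there is a monochromatic edge. This yields disjoint sets $S=S_{i_1}\cup A_1$ and $T'=S_{i_2}\cup A_2$ with $f(\2_S)=f(\2_{T'})\eqqcolon b$. For a candidate avoiding set $X$ and any $Y\supseteq X$, I would then check that the matrix with rows $\1_Y,\2_S,\2_{T'},\0,\dots,\0$ (with $r-3$ zero rows) is column-wise in $\LO_3^r$; applying $f$ gives $(f(\1_Y),b,b,f(\0),\dots,f(\0))\in\CF_\ell^r$, and the case analysis of Lemma~\ref{lem:existence} forces $f(\1_Y)\neq b$. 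Combined with $X\supseteq S_j$ for some $j\neq i_1,i_2$, the $T$-avoiding property of $S_j$ then gives $f(\1_Y)\notin T$, so $X$ is $(T\cup\{b\})$-avoiding.

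Two difficulties arise beyond the purely formal adaptation. First, I must verify that $b\notin T$, so that $|T\cup\{b\}|=t+1$: in the NAE proof this followed automatically because $b=f(\1_{S_{i_1}\cup A_1})$ and $S_{i_1}\subseteq S_{i_1}\cup A_1$ is $T$-avoiding (for $\1$-vectors), but here $b=f(\2_{S_{i_1}\cup A_1})$, and the avoiding property does not directly constrain $\2$-vectors. I expect the remedy is an auxiliary $\LO$-matrix argument --- for instance, pairing $\2_S$ with a partition of $[n]\setminus S$ into $\1$-blocks at least one of which contains some $S_{j'}$ --- whose $\CF$ image forces $f(\2_S)\notin T$.

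Second, and more seriously, the size bound $|X|\leq(t+1)\ell$ is tight: the naive candidate $X=[n]\setminus(S\cup T')$ has size at most $(\ell-1)t\ell+|R|-2h\leq(\ell-1)t\ell+\ell$, which exceeds $(t+1)\ell$ whenever $\ell\geq 3$. Thus $X$ must in fact contain only a single $S_{j^*}$ together with at most $\ell$ leftover elements of $R$, which means $S\cup T'$ must cover $\ell$ of the $\ell+1$ sets $S_j$ rather than merely $2$. This requires a more delicate chromatic argument than the plain $\KK_{\ell+1}\times\KG(R,h)$ product --- one parametrised by pairs $(I_1,A_1),(I_2,A_2)$ with $|I_1\cup I_2|=\ell$ and $A_1,A_2\subseteq R$ --- and I expect devising the right such argument to be the main technical hurdle of the proof.
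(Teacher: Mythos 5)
Your proposal follows the template of Lemma~\ref{lem:lem2}, but the two difficulties you flag at the end are both genuine gaps, and neither is resolved; as written the argument does not go through. The size issue is fatal for this route: with hypothesised sets of size $\leq t\ell$ (linear in $t$, unlike the geometric bounds $\ell^t$ of Lemma~\ref{lem:lem2}), the candidate $X=[n]\setminus(S\cup T')$ has size up to $(\ell-1)t\ell+\ell$, which already exceeds $(t+1)\ell$ for $\ell=3$, $t=1$. The ``more delicate chromatic argument'' you hope would force $S\cup T'$ to cover $\ell$ of the $\ell+1$ sets is not supplied, and it is not clear any such argument exists. The first difficulty is also real: $T$-avoidance only constrains values $f(\1_R)$ for $R\supseteq S_j$, and no matrix you can form from $\2_S$ together with $\0$'s and $\1$-rows obviously pins down $f(\2_S)$, so $b\notin T$ is not established.

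The paper's proof avoids both problems by \emph{not} unioning the $S_j$'s into the Kneser vertices. It runs the single-Kneser-graph argument of Lemma~\ref{lem:existence} verbatim on $[n]$ to get disjoint $S,T$ with $f(\2_S)=f(\2_T)$ and $X=[n]\setminus(S\cup T)$ of size $\leq\ell$. Since the $\geq\ell+1$ sets of $\mathcal{F}$ are pairwise disjoint and $|X|\leq\ell$, some $Z\in\mathcal{F}$ is disjoint from $X$; for any other $C\in\mathcal{F}$ one takes $C'=C\cup X$, which has size $\leq t\ell+\ell=(t+1)\ell$ exactly as required. The new avoided colour is $f(\1_Z)$ rather than $f(\2_S)$: one pads the matrix with $r-3$ copies of $\1_Z$ instead of $\0$, so that for any $D'\supseteq C'$ the $\CF$ condition applied to $(f(\1_{D'}),f(\2_S),f(\2_T),f(\1_Z),\dots,f(\1_Z))$ forces $f(\1_{D'})\neq f(\1_Z)$, while $f(\1_Z)\notin T$ is immediate because $Z$ itself is $T$-avoiding. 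This simultaneously fixes your membership problem (the new colour is the value of a $\1$-vector over an avoiding set) and your size problem (only one set of $\mathcal{F}$ is added to $X$, not $\ell-1$ of them). You would need to restructure your proof along these lines rather than patch the product-Kneser construction.
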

\begin{proof}
By assumption and the pigeonhole principle, there is a family $\mathcal{F}$ of at least $\ell+1$ disjoint $T$-avoiding sets of size $\leq t \ell$ for some $T \subseteq \{0,1,\dots,\ell-1\}$ of size $t$. Let $h,P,Q,X$ be as in the proof of~\Cref{lem:existence} --- thus $P, Q \in [n]^{(h)}, X \in [n]^{(n - 2h)}, f(\2_P) = f(\2_Q)$ and $P, Q, X$ form a partition of $[n]$. Recall that $|X| \leq \ell$.

  Now, since the sets in $\mathcal{F}$ are disjoint, at most $\ell$ of the sets in
  $\mathcal{F}$ intersect $X$. Thus, since $|\mathcal{F}| \geq \ell+1$, there is
  some $Z\in \mathcal{F}$ disjoint from $X$. Let $C$ be any other set in
  $\mathcal{F}$ and define $C' = C \cup X$. Note that $|C'| \leq (t+1)\ell$.
  Note also that $f( \1_Z) \notin T$ as $Z$ is $T$-avoiding, hence $|T\cup \{f(\1_Z)\}| = t+1$. We will show that $C'$
  is $(T\cup \{f(\1_Z)\})$-avoiding, proving the result.

Since $C$ is $T$-avoiding and $C\subseteq C'$, $C'$ is also $T$-avoiding
  (cf.~\Cref{item:upwardsClosed}). To see that $C'$ is
  $\{f(\1_Z)\}$-avoiding, note that 
  for every $D' \subseteq [n]$ with $C'
  \subseteq D'$, 
   all the columns of the $r\times n$ matrix whose rows are $\1_{D'}$, $\2_P, \2_Q$, and $r-3$ many copies of $\1_Z$
    are tuples of $\LO_3^r$.
For $i \in P$ (respectively $i \in Q$), the
  unique maximum is given by a 2 in the $\2_P$ row (respectively the $\2_Q$ row). Since $P, Q, X$
  form a partition of $[n]$, it remains to consider $i \in X$. Since $P, Q, Z$ are
  disjoint from $X$, all the $\2_P, \2_Q, \1_Z$ rows have a 0 in these columns.
  On the other hand, the (unique) $\1_{D'}$ row has a 1, since $i \in X
  \subseteq C' \subseteq D'$. Hence we see that every column in the matrix is a
  tuple in $\LO_3^r$. Since $f$ is a polymorphism of
  $(\LO_3^r,\CF_\ell^r)$, $(f(\2_P), f(\2_Q),
  f(\1_{D'}), f(\1_Z), \ldots, f(\1_Z))$ is a tuple in $\CF_\ell^r$.

  Hence since $f(\2_P) = f(\2_Q)$, we have that $f(\1_{D'}) \neq f(\1_Z)$. Thus $C'$ is
  a $(T\cup \{f(\1_Z)\})$-avoiding, so $(t + 1)$-avoiding set as required.
\end{proof}

\begin{theorem}\label{locf}
$\PCSP(\LO_3^r, \CF_\ell^r)$ is \NP-hard for all constant $\ell \geq 3$ and $r\geq 4$.
\end{theorem}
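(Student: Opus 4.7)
The plan is to invoke Theorem \ref{our hardness condition} on the template $(\LO_3^r, \CF_\ell^r)$, using Lemmas \ref{lem:existence} and \ref{lem:boosting} to supply precisely its two hypotheses. The remaining work is essentially a matter of setting the parameters correctly.

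More concretely, I would set $\beta_t = t\ell$ and $\alpha_t = \binom{\ell}{t}\ell$ for each $1\leq t\leq \ell$, and take $N$ to be any constant larger than $\ell$ so that the case $n \leq \ell$ (which is handled trivially by $[n]$ itself being avoiding, cf.~\Cref{item:domainAvoiding}) is absorbed into the constant bound on $|\sel(f)|$. With these choices, Hypothesis~1 of Theorem \ref{our hardness condition} asks for a $1$-avoiding set of size at most $\beta_1 = \ell$, which is exactly what Lemma \ref{lem:existence} provides. Hypothesis~2 asks that whenever $f$ has arity $\geq N$ and admits a disjoint family of more than $\alpha_t = \binom{\ell}{t}\ell$ many $t$-avoiding sets each of size $\leq \beta_t = t\ell$, then $f$ admits a $(t+1)$-avoiding set of size $\leq \beta_{t+1} = (t+1)\ell$; this is exactly the statement of Lemma \ref{lem:boosting}. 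Hence both hypotheses are met and Theorem \ref{our hardness condition} yields \NP-hardness of $\PCSP(\LO_3^r, \CF_\ell^r)$.

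The main obstacle has already been overcome in Lemmas \ref{lem:existence} and \ref{lem:boosting}: it was essential there that the size bound $\beta_t = t\ell$ and the count bound $\alpha_t = \binom{\ell}{t}\ell$ are constants depending only on $\ell$ (and not on the arity $n$), which is what allows the abstract selection scheme in the proof of Theorem \ref{our hardness condition} to produce a $\sel(f)$ of constant size. At this final step, the only thing to verify is that these parameters line up; there is no further combinatorial content. I would therefore write the proof as a single paragraph that names $\beta_t$, $\alpha_t$, $N$, points out that Lemmas \ref{lem:existence} and \ref{lem:boosting} are respectively the two hypotheses of Theorem \ref{our hardness condition}, and concludes.
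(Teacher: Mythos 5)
Your proposal is correct and matches the paper's proof, which likewise derives the theorem directly from \Cref{our hardness condition}, \Cref{lem:existence}, and \Cref{lem:boosting}; you merely spell out the parameter choices $\beta_t = t\ell$, $\alpha_t = \binom{\ell}{t}\ell$, and $N$ that the paper leaves implicit.
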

\begin{proof}
  By~\Cref{our hardness condition}, \Cref{lem:existence} and~\Cref{lem:boosting}.
\end{proof}
\Cref{thm:cf} and~\Cref{thm:lo} follow immediately:
\cf*
\begin{proof}
The case $r=3$ is given by~\Cref{thm:main} as $\NAE^3_k=\CF^3_k$ for every $k\geq 2$.
The case $r\geq 5$ and $2\leq k\leq\ell$ follows from~\Cref{thm:main} and~\Cref{lem:red}.
For $r=4$ and $k=2$, recall from~\Cref{sec:intro} that $\CF_2^4$ is identical to
  solving systems of mod-2 equations of the form $x + y + z + t \equiv 1 \pmod
  2$. Thus, $\CSP(\CF_2^4)$ and also $\PCSP(\CF_2^4, \CF_\ell^4)$ is in \P{} by homomorphic relaxation.
For $r=4$ and $3\leq k\leq\ell$, the result follows by~\Cref{locf} and by homomorphic relaxation.
\end{proof}
    
\lo*

\begin{proof}
By~\Cref{locf} and by homomorphic relaxation as $\LO_3^r\to\LO_k^r$ and $\LO_\ell^r\to\CF_\ell^r$.
\end{proof}

\subsection{Extending hardness to other templates}
In the above proofs of~\Cref{lem:existence} and~\Cref{lem:boosting}, the only required property of $\CF_\ell^r$ is that for every tuple in the relation, if the first two entries are equal, then the remaining $r-2$ entries in the tuple cannot all be equal. Thus, the same proof also shows a stronger result:

\begin{definition}
Define $\BNAE_\ell^{s,r-s}$ 
    (Block-NAE) as the relational structure with domain $\{0,1,\dots,\ell-1\}$, with a single $r$-ary relation which contains the tuples for which if any $s$ entries are the same, then the remaining $r-s$ entries cannot all be the same. (This relation with $s=2$ is strictly larger than the relation corresponding to $\CF_\ell^r$ when $r \geq 6$).
\end{definition}
     Then the exact same proof from \Cref{sec:proof} (replacing every occurrence of $\CF_\ell^r$ with $\BNAE_\ell^{2,r-2}$) shows that $\PCSP(\LO_3^r, \BNAE_\ell^{2,r-2})$ is \NP-hard for all constant $\ell \geq 3, r\geq 4$. 

    In fact, using Kneser hypergraphs and the same proof technique, one can also show \NP-hardness of $\PCSP(\LO_3^r, \BNAE_\ell^{s,r-s})$ for all constant $\ell \geq 3, r\geq 4, 2\leq s \leq r-2$.

\begin{definition}
The \emph{$s$-uniform Kneser hypergraph} is defined as 
$\KG^{(s)}(A,h) = (A^{(h)},E)$,
where $\{P_1,\dots,P_s\} \in E$ if and only if $\{P_1,\dots,P_s\}$ is a collection of pairwise disjoint subsets of $A$ of size $h$. Again, for the special case $A = [n]$, we use the notation $\KG^{(s)}(n, h) = \KG^{(s)}([n], h)$.
\end{definition}
 There is an analogue of Lov\'asz's theorem for Kneser hypergraphs~\cite{AFL86}.\footnote{For an $s$-uniform hypergraph $H$, the \emph{chromatic number} ($\chi(H)$) of $H$, is the smallest $k$ such
that $H\to \NAE^s_k$.}

 \begin{theorem}[\cite{AFL86}]\label{kneserchrombound2}
$\chi(\KG^{(s)}(n,h))=\left\lceil\frac{n-s(h-1)}{s-1}\right\rceil$ for any $n \in \mathbb{N}$,$ 1\leq h \leq n/s$.
 \end{theorem}
    
The proof of hardness then continues completely analogously (and is actually a generalisation of the proof in \Cref{sec:proof}), but is included for completeness.

\begin{lemma}\label{lem:existence2}
    Let $ \ell \geq 3$, $r\geq 4$, and $ 2 \leq s \leq r-2$.
    Then any $f \in \Pol^{(n)}(\LO_3^r, \BNAE_\ell^{s,r-s})$ has a 1-avoiding set of size $\leq \ell(s-1)$.
\end{lemma}
\begin{proof}
  By~\Cref{item:domainAvoiding}, $[n]$ is an $(\ell-1)$-avoiding set and hence a 1-avoiding set. Thus if $n\leq\ell(s-1)$, we are done.
  Otherwise assume $n > \ell(s-1)$, and set $h = \lceil \frac{n-\ell(s-1)}{s}\rceil$.
    Consider the Kneser graph $\KG^{(s)}([n], h)$, and colour each vertex $P$ by $f(\2_P)$. By~\Cref{kneserchrombound2}, $\chi(\KG^{(s)}(n, h)) =\left\lceil\frac{n-s(h-1)}{s-1}\right\rceil > \ell$, so there exist disjoint sets $P_1,\dots,P_s \in [n]^{(h)}$  such that $f(\2_{P_1}) = f(\2_{P_2}) = \dots = f(\2_{P_s})$. Let $X \coloneqq [n]\setminus \bigcup_{j\in [s]}P_j$.
    
    Now observe that for any $Y \subseteq [n]$ with $X\subseteq Y$, 
     all the columns of the $r\times n$ matrix whose rows are $\1_Y$, $\2_{P_1}, \2_{P_2}, \dots, \2_{P_s}$, and $r-s-1$ many copies of $\0$
    are tuples of $\LO_3^r$.
    For $i \in P_j$, the unique maximum is a 2 in the $\2_{P_j}$ row; and for $i \in X$
    the unique maximum is a 1 in the $\1_Y$ row. Hence, since $f$ is a
    polymorphism of $(\LO_3^r,\BNAE_\ell^{s, r - s})$, we have that $(f(\1_Y), f(\2_{P_1}) , f(\2_{P_2}) , \dots , f(\2_{P_s}), f(\0), \ldots, f(\0))$ is a tuple in $\BNAE_\ell^{s,r-s}$. Since $f(\2_{P_1}) = f(\2_{P_2}) = \dots = f(\2_{P_s})$ we deduce that $f(\1_Y) \neq f(\0)$.
    Thus it follows that $X$ is $\{ f(\0) \}$-avoiding, and thus 1-avoiding. Noting that $X$ has size $n-sh\leq \ell(s-1)$ completes the proof.
\end{proof}

\begin{lemma} \label{lem:boosting2}

Let $\ell \geq 3, r\geq 4, 2 \leq s \leq r-2 $. Let $1\leq t<\ell$, and suppose that $f\in \Pol^{(n)}(\LO_3^r, \BNAE_\ell^{s,r-s})$ has a disjoint family of $> \binom{\ell}{t}(s-1)\ell$ many $t$-avoiding sets, all of size $\leq t(s-1)\ell$. Then $f$ has a $(t+1)$-avoiding set of size $\leq (t+1)(s-1)\ell$
  
\end{lemma}
\begin{proof}
By assumption and the pigeonhole principle, there is a family $\mathcal{F}$ of at least $(s-1)\ell+1$ disjoint $T$-avoiding sets of size $\leq t(s-1)\ell$ for some $T \subseteq \{0,1,\dots,\ell-1\}$ of size $t$. Let $h,P_1,\dots,P_s,X$ be as in the proof of~\Cref{lem:existence2}. Recall $|X| \leq (s - 1) \ell$.

  Now, since the sets in $\mathcal{F}$ are disjoint, at most $(s-1)\ell$ of the sets in
  $\mathcal{F}$ intersect $X$. Thus, since $|\mathcal{F}| \geq (s-1)\ell+1$, there is
  some $Z\in \mathcal{F}$ disjoint from $X$. Let $C$ be any other set in
  $\mathcal{F}$ and define $C' = C \cup X$. Note that $|C'| \leq (t+1)(s-1)\ell$.
  Note also that $f( \1_Z) \notin T$ as $Z$ is $T$-avoiding, hence $|T\cup \{f(\1_Z)\}| = t+1$. We will show that $C'$
  is $(T\cup \{f(\1_Z)\})$-avoiding, proving the result.

Since $C$ is $T$-avoiding and $C\subseteq C'$, $C'$ is also $T$-avoiding
  (cf.~\Cref{item:upwardsClosed}). To see that $C'$ is
  $\{f(\1_Z)\}$-avoiding, note that 
  for every $D' \subseteq [n]$ with $C'
  \subseteq D'$, 
   all the columns of the $r\times n$ matrix whose rows are $\1_{D'}$, $\2_{P_1}, \2_{P_2}, \dots, \2_{P_s}$, and $r-s-1$ many copies of $\1_Z$
    are tuples of $\LO_3^r$.
For $i \in P_j$, the
  unique maximum is given by a 2 in the $\2_{P_j}$ row. Since $P_1, \ldots, P_s, X$
  form a partition of $[n]$, it remains to consider $i \in X$. Since $P_1,\dots,P_s, Z$ are
  disjoint from $X$, the $\1_Z$ row and all the $\2_{P_j} $ rows have a 0 in these columns.
  On the other hand, the (unique) $\1_{D'}$ row has a 1, since $i \in X
  \subseteq C' \subseteq D'$. Hence we see that every column in the matrix is a
  tuple in $\LO_3^r$. Since $f$ is a polymorphism of
  $(\LO_3^r,\BNAE_\ell^{s,r-s})$, $(f(\2_{P_1}) , f(\2_{P_2}) , \dots , f(\2_{P_s}),
  f(\1_{D'}), f(\1_Z), \ldots, f(\1_Z))$ is a tuple in $\BNAE_\ell^{s,r-s}$.

  Hence since $f(\2_{P_1}) = f(\2_{P_2}) = \dots = f(\2_{P_s})$, we have that $f(\1_{D'}) \neq f(\1_Z)$. Thus $C'$ is
  a $(T\cup \{f(\1_Z)\})$-avoiding, so $(t + 1)$-avoiding set as required.
\end{proof}

\begin{theorem}\label{lobnae}
$\PCSP(\LO_3^r, \BNAE_\ell^{s,r-s})$ is \NP-hard for all constant $\ell \geq 3$, $r\geq 4$, and $2\leq s \leq r-2$.
\end{theorem}
\begin{proof}
  By~\Cref{our hardness condition}, \Cref{lem:existence2} and~\Cref{lem:boosting2}.
\end{proof}

We finish with noting that our proofs can be used to show \NP-hardness of more
templates if the symmetry requirement is dropped, then capturing colouring
variants of directed hypergraphs.

\section*{Acknowledgement}

We thank the anonymous reviewers for their detailed feedback.

{\small
\bibliographystyle{alphaurl}
\bibliography{bib}
}
\end{document}